\newtheorem{theorem}{Theorem}[section]
\newtheorem{lemma}[theorem]{Lemma}
\theoremstyle{definition}
\newtheorem{definition}[theorem]{Definition}
\newtheorem{example}[theorem]{Example}
\newtheorem{corollary}[theorem]{Corollary}
\theoremstyle{remark}
\newtheorem{remark}[theorem]{Remark}
\numberwithin{equation}{section}
\author{Przemys{\l}aw Rola}
\address{Przemys{\l}aw Rola
\newline
\indent Institute of Mathematics
\newline
\indent Faculty of Mathematics and Computer Science
\newline
\indent Jagiellonian University
\newline
\indent ul. {\L}ojasiewicza 6
\newline
\indent 30-348 Krak\'ow, Poland}
\email{Przemyslaw.Rola@im.uj.edu.pl}
\subjclass[2010]{91G80, 60G42, 60G40, 91B25}
\keywords{arbitrage, bid-ask spreads, consistent price system, bid-ask martingale measure}
\begin{document}

\title[Arbitrage in markets with bid-ask spreads]{\large Arbitrage in markets with bid-ask spreads \\ \vspace{4 mm} The fundamental theorem of asset pricing in finite discrete time markets with bid-ask spreads and a money account}

\maketitle

\begin{abstract}
In this paper a finite discrete time market with an arbitrary state space and bid-ask spreads is considered. The notion of an equivalent bid-ask martingale measure (EBAMM) is introduced and the fundamental theorem of asset pricing is proved using (EBAMM) as an equivalent condition for no-arbitrage. The Cox-Ross-Rubinstein model with bid-ask spreads is presented as an application of our results.
\end{abstract}

\section{Introduction}
\label{intro}
The fundamental theorem of asset pricing, often called the Dalang-Morton-Willin\-ger theorem, states that for the standard discrete-time finite horizon model of security market there is no arbitrage if and only if the price process is a martingale with respect to an equivalent probability measure. However, the equivalent conditions for the absence of arbitrage in markets without friction were proposed and fully proved up to the $90$s the general problem of the equivalent conditions for the absence of arbitrage even in markets with proportional transaction costs is still open. On the other hand many great and significant work was done in this topic. We shortly recall some of papers devoted to multi-asset discrete-time models with friction.\\
\indent In paper \cite{KabRasStr} of Kabanov, R\'{a}sonyi, Stricker the equivalent conditions for the absence of so-called weak arbitrage opportunities (i.e. strict no-arbitrage) were given under the assumption of efficient friction. The general version of this theorem was proved by Kabanov and Stricker in \cite{KabStr2} but in the model with finite state space $\Omega$. Soon after Schachermayer in his famous paper \cite{Schach} gave the equivalent conditions for the absence of the so-called robust no-arbitrage. The general theorem states that the robust no-arbitrage is equivalent to the existence of a strictly consistent price system. Moreover, the robust no-arbitrage cannot be replaced be the strict no-arbitrage due to Schachermayer's counterexample presented in \cite{Schach}. Going further in this direction very interesting theorem was proved by Grigoriev in \cite{Grig}. The ge\-ne\-ral result of \cite{KabStr2} was extended for an arbitrary $\Omega$ in the special case of $2$ assets.\\
\indent One of the corollaries of that theorem states that in the market with bid and ask scalar processes $S^b$, $S^a$ and with a money account the absence of arbitrage is equivalent to the existence of a process $\tilde{S}$, which is a martingale under an equivalent probability measure and satisfies the inequalities $S^b \leq \tilde{S} \leq S^a$. In our ter\-mi\-no\-lo\-gy (which we introduce in a similar way as in \cite{JouKal}) it simply means that no arbitrage in the case of one risky asset and a money account is equivalent to the existence of a bid-ask consistent price system. For arbitrary $d$ risky assets this problem remains open. However, the author of \cite{Grig} suggested that its solution seems to be negative. One of purposes of our paper is to analyse the general model of market with bid-ask prices and a money account in order to research Grigoriev's question.\\
\indent Market model with bid and ask price processes was mainly developed in the famous paper of Jouini and Kallal \cite{JouKal} where the main result states that the absence of the so-called no free lunch is equivalent to the existence of a bid-ask consistent price system. It is noteworthy that the result of Grigoriev \cite{Grig} also strengthens the one of Jouini and Kallal in the case of one risky asset. In our paper we introduce the notion of an equivalent bid-ask martingale measure and prove that in the model with bid-ask spreads and an arbitrary state space $\Omega$ the existence of such a measure is equivalent to no arbitrage. On the other hand we show that the existence of (EBAMM) is equivalent to the existence of both supermartingale as well as submartingale consistent price systems under the same equivalent probability measure. We hope that the main theorem of our paper contributes to the solution of Grigoriev's hypothesis. In some sense it develops the results of \cite{JouKal} as well as \cite{Grig}. Moreover, the notion of (EBAMM) can be seen as a generalization of an equivalent martingale measure which is successfully used in markets without friction. It can also bypass the condition of the existence of a process $\tilde{S}$, which evolves between bid and ask price processes and is a martingale under some equivalent probability measure. It is rather obvious that such a process cannot exist in real and is only a useful tool for the pricing. We believe that this paper gives a contribution to the arbitrage theory in markets with friction and may reply in a comprehensible way to some of the questions.\\
\indent The paper is organized as follows. In section $2$ a model of a financial market and some basic definitions are introduced. In section $3$ the main theorem is presented and proved. The last chapter consists of examples as well as the applications of achieved results. Mainly the Cox-Ross-Rubinstein model with bid-ask spreads is presented.

\section{A mathematical model of a financial market}
\label{model}

Let $(\Omega, \mathcal{F}, \mathbb{P})$ be a complete probability space equipped with a discrete-time filtration $\mathbb{F} = (\mathcal{F}_t)_{t=0}^T$ such that $\mathcal{F}_T = \mathcal{F}$ and $T$ is a finite time horizon. Assume that in the market there are two processes $\underline{S} = (\underline{S}_t)_{t=0}^T = (\underline{S}_t^1, \ldots, \underline{S}_t^d)_{t=0}^T$ and $\overline{S} = (\overline{S}_t)_{t=0}^T = (\overline{S}_t^1, \ldots, \overline{S}_t^d)_{t=0}^T$, which are $d$-dimensional adapted to $\mathbb{F}$ and have strictly positive components, i.e. $\underline{S}_t^i > 0$ and $\overline{S}_t^i > 0$, $\mathbb{P}$-a.e. Furthermore we assume that $\underline{S}_t^i \leq \overline{S}_t^i$ for any $t=0,1,\ldots,T$ and $i=1,\ldots,d$. These processes model prices of shares for selling and buying respectively, i.e. at every moment $t$ the investor can buy or sell unlimited amounts of $i$-th shares at prices $\overline{S}_t^i$ and $\underline{S}_t^i$ respectively. We will call $\underline{S}$ the bid price process and $\overline{S}$ the ask price process. Analogously the pair $(\underline{S},\overline{S})$ will be called the bid-ask price process. Let us assume that there exists a money account or a bond in the market, which is a strictly positive predictable process $B = (B_t)_{t=0}^T$ and all transactions are calculated in units of this process. For simplicity we assume that $B_t \equiv 1$ for all $t=0, \ldots, T$ and to avoid technical ambiguity we put $\mathcal{F}_{-1} := \lbrace \emptyset, \Omega \rbrace$. It is noteworthy that this assumption do not restrict the generality of our model thanks to the discounting procedure. This procedure in details was described in chapter $2.1$ of \cite{DelSchach} for the case of markets without transaction costs.\\
\indent A trading strategy in the market is an $d$-dimensional process $H = (H_t)_{t=1}^T = (H_t^1, \ldots, H_t^d)_{t=1}^T$, which is predictable with respect to $\mathbb{F}$. We denote the set of all such strategies by $\mathcal{P}_T$. Let us also define its subsets $\mathcal{P}_T^+ := \lbrace H \in \mathcal{P}_T \; \vert \; H \geq 0 \rbrace$, $\mathcal{P}_T^- := \lbrace H \in \mathcal{P}_T \; \vert \; H \leq 0 \rbrace$ where the random vector $H \geq 0$ iff $H^i \geq 0$ for any $i=1,\ldots,d$. We use the notation $(H \cdot S)_t := \sum_{j=1}^t H_j \cdot \Delta S_j$ where $\cdot$ is the inner product in $\mathbb{R}^d$. Let $x = (x_t)_{t=1}^T$ be a value process in the market with bid-ask spreads for the strategy $H$ starting from $0$ units in bank and stock accounts, i.e. $x_t$ is defined as follows
$$x_t = x_t(H) := - \sum_{j=1}^t (\Delta H_j)^+ \cdot \overline{S}_{j-1} + \sum_{j=1}^t (\Delta H_j)^- \cdot \underline{S}_{j-1} + (H_t)^+ \cdot \underline{S}_t - (H_t)^- \cdot \overline{S}_t$$
where $\Delta H_j^i = H_j^i - H_{j-1}^i$ for any $i =1, \ldots, d$ and $j = 1, \ldots, t$. Especially we put $\Delta H_1^i = H_1^i$ and we will usually skip the symbol of the inner product. The random variable $x_t$ models the gain or loss occurred up to time $t$. The first sum is the aggregate purchase of assets up to time $t$ despite the second sum, which corresponds to the aggregate sales. Notice that at time $t$ we liquidate all positions in risky assets and the following equality is satisfied $\sum_{j=1}^t \Delta H_j = H_t$. It can be interpreted as follows. If we want to know the real value of our portfolio at time $t$ we should calculate it in units of a money account. To do this we should liquidate all positions in risky assets. Actually this procedure must not be carry out in real. We can use it only for calculating the value of our portfolio. In literature it is known as the immediate liquidation value of the portfolio (see e.g. \cite{CetJarProt} where on the other hand the marked-to-market value is considered).
\begin{remark}
Notice that all changes in units of assets must be obtained by borrowing or investing in a money account. Hence our position in the money account is uniquely determined by the strategy, which actually is self-financing.
\end{remark}
We will use the notation $L^0(\mathbb{R}^d,\mathcal{F}_t)$ for the set of $\mathcal{F}_t$-measurable random vectors taking values in $\mathbb{R}^d$ with the convention that $L^0(\mathbb{R}^d)$ stands for $L^0(\mathbb{R}^d,\mathcal{F}_T)$. In the case of random variables (i.e. $d=1$) we will simply use the abbreviations $L^0(\mathcal{F}_t) := L^0(\mathbb{R},\mathcal{F}_t)$ and $L^0 := L^0(\mathbb{R})$. Moreover let $L_+^0(\mathbb{R}^d,\mathcal{F}_t)$ denotes the subspace of $L^0(\mathbb{R}^d,\mathcal{F}_t)$ consisting of only non-negative random vectors. To simplify the notation we will use the same convention as previous, i.e. we will write $L_+^0(\mathbb{R}^d)$, $L_+^0(\mathcal{F}_t)$, $L_+^0$ in an appropriate situation. Furthermore the standard spaces $L^1$ and $L^{\infty}$ are treated in the same way.\\
\indent To make our reasoning much more clear we introduce for any $1 \leq t \leq t+k \leq T$ and $H \in L^0(\mathbb{R}^d,\mathcal{F}_{t-1})$ the following random variable:
$$x_{t-1,t+k}(H) := - (H)^+ \cdot \overline{S}_{t-1} + (H)^- \cdot \underline{S}_{t-1} + (H)^+ \cdot \underline{S}_{t+k} - (H)^- \cdot \overline{S}_{t+k}.$$
\indent Let us now define $\mathcal{R}_T := \lbrace x_T(H) \; \vert \; H \in \mathcal{P}_T \rbrace$ and the set of hedgeable claims, which is of the form
$$\mathcal{A}_T := \mathcal{R}_T - L_+^0.$$
By $\overline{\mathcal{A}}_T$ we denote the closure of $\mathcal{A}_T$ in probability. The following definition is crucial.

\begin{definition}
We say that there is no arbitrage in the market with bid-ask spreads if and only if
\begin{equation}
\mathcal{R}_T \cap L_+^0 = \lbrace 0 \rbrace. \tag{\textrm{NA}} \label{NA}
\end{equation}
\end{definition}
Notice that the condition (NA) is equivalent to $\mathcal{A}_T \cap L_+^0 = \lbrace 0 \rbrace$. Now we introduce another sets what simplify and make possible to prove the main theorem. Let us define for any $0 \leq j < t \leq T$:
\begin{equation}
\mathcal{R}_{j,t}^+ := \lbrace H \cdot (\underline{S}_{t} - \overline{S}_{j}) \; \vert \; H \in L_+^0(\mathbb{R}^d,\mathcal{F}_j) \rbrace,
\end{equation}
\begin{equation}
\mathcal{R}_{j,t}^- := \lbrace H \cdot (\overline{S}_{t} - \underline{S}_j) \; \vert -H \in L_+^0(\mathbb{R}^d,\mathcal{F}_j) \rbrace.
\end{equation}
Furthermore for any $1 \leq t \leq t+k \leq T$ we put
\begin{equation}
\mathbb{F}_{t-1,t+k} := \mathcal{R}_{t-1,t+k}^+ + \mathcal{R}_{t-1,t+k}^-, \qquad F_{t-1,t+k} := \mathbb{F}_{t-1,t+k} - L_+^0(\mathcal{F}_{t+k}).
\end{equation}
As counterparts of sets $\mathcal{R}_t$ and $\mathcal{A}_t$ we define for any $t=1,\ldots,T$ the following sets
\begin{equation} \label{F_sum_R}
\mathbb{F}_t := \sum\limits_{j<t} \mathcal{R}_{j,t}^+ + \sum\limits_{j<t} \mathcal{R}_{j,t}^- \qquad \text{and} \qquad F_t := \mathbb{F}_t - L_+^0(\mathcal{F}_t).
\end{equation}
Consequently we also introduce
\begin{equation}
\Lambda_T := \sum\limits_{t=1}^T \mathbb{F}_t - L_+^0.
\end{equation}
\begin{remark}
Notice that the sets $F_{t-1,t+k}$, $\mathbb{F}_{t-1,t+k}$, $\mathbb{F}_t$, $F_t$ and $\Lambda_t$ are convex cones and $\Lambda_T = \sum_{t=1}^T F_t$.
\end{remark}
\begin{lemma} \label{rem3}
Let (NA) $\mathcal{A}_T \cap L_+^0 = \lbrace 0 \rbrace$, then there is no arbitrage in the market with any time horizon $1 \leq t \leq T$, i.e. $\mathcal{A}_t \cap L_+^0(\mathcal{F}_t) = \lbrace 0 \rbrace$.
\end{lemma}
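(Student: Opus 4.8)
The plan is to reduce no-arbitrage at an intermediate horizon $t$ to the hypothesis $\mathcal{A}_T \cap L_+^0 = \{0\}$ by embedding every $t$-horizon strategy into a $T$-horizon strategy that liquidates at time $t$ and then stays out of the market. For $t = T$ there is nothing to prove, so fix $1 \leq t < T$. Given a predictable strategy $H = (H_s)_{s=1}^t$, I would define its extension $\tilde{H} = (\tilde{H}_s)_{s=1}^T \in \mathcal{P}_T$ by setting $\tilde{H}_s := H_s$ for $1 \leq s \leq t$ and $\tilde{H}_s := 0$ for $t < s \leq T$. Since $\tilde{H}$ agrees with the predictable process $H$ up to time $t$ and vanishes afterwards, it is again predictable, so $\tilde{H} \in \mathcal{P}_T$.

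The key step is the identity $x_T(\tilde{H}) = x_t(H)$. Financially this is transparent: at time $t$ the value $x_t(H)$ is the cash obtained after liquidating all risky positions, and holding nothing in a money account with $B \equiv 1$ from $t$ to $T$ leaves this amount unchanged. Formally, I would verify it directly from the definition of $x_T$. The increments $\Delta \tilde{H}_s$ coincide with $\Delta H_s$ for $s \leq t$, vanish for $s > t+1$, and satisfy $\Delta \tilde{H}_{t+1} = -H_t$; moreover $\tilde{H}_T = 0$ annihilates the terminal liquidation term. Using $(-H_t)^+ = (H_t)^-$ and $(-H_t)^- = (H_t)^+$, the single surviving term coming from $j = t+1$, namely $-(\Delta\tilde{H}_{t+1})^+ \cdot \overline{S}_t + (\Delta \tilde{H}_{t+1})^- \cdot \underline{S}_t$, reproduces exactly the liquidation term $(H_t)^+ \cdot \underline{S}_t - (H_t)^- \cdot \overline{S}_t$ of $x_t(H)$. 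Collecting the remaining terms then gives $x_T(\tilde{H}) = x_t(H)$.

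This identity yields the inclusion $\mathcal{R}_t \subseteq \mathcal{R}_T$, once one notes that any value $x_t(H)$ is $\mathcal{F}_t$-measurable and hence $\mathcal{F}_T$-measurable. Combined with $L_+^0(\mathcal{F}_t) \subseteq L_+^0$, this gives $\mathcal{A}_t = \mathcal{R}_t - L_+^0(\mathcal{F}_t) \subseteq \mathcal{R}_T - L_+^0 = \mathcal{A}_T$. Intersecting with the smaller positive cone then yields $\mathcal{A}_t \cap L_+^0(\mathcal{F}_t) \subseteq \mathcal{A}_T \cap L_+^0 = \{0\}$ by hypothesis, while the reverse inclusion holds trivially since $0 \in \mathcal{A}_t \cap L_+^0(\mathcal{F}_t)$ (take $H \equiv 0$). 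Hence $\mathcal{A}_t \cap L_+^0(\mathcal{F}_t) = \{0\}$, which is the asserted absence of arbitrage at horizon $t$.

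I expect the only delicate point to be the bookkeeping in the identity $x_T(\tilde{H}) = x_t(H)$: one must check that the liquidation term built into $x_t(H)$ is precisely recovered as the cost of unwinding the position $H_t$ in the transition from $t$ to $t+1$, and that all subsequent increments together with the terminal liquidation of $\tilde{H}$ genuinely vanish. Everything else is a routine inclusion of convex cones.
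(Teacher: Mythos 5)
Your proposal is correct and follows exactly the paper's approach: extend the $t$-horizon strategy by zero after time $t$, observe that unwinding $H_t$ at time $t+1$ reproduces the liquidation term of $x_t(H)$ so that $x_T(\tilde{H}) = x_t(H)$, and conclude $\mathcal{A}_t \subseteq \mathcal{A}_T$. The paper states this in one line; your version merely supplies the bookkeeping, which checks out.
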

\begin{proof}
Notice that if $H$ is an arbitrage strategy in a model with the time horizon $t$ (so at time $t$ we liquidate all positions in stock) then it is also an arbitrage strategy in a model with a larger time horizon, especially with the time horizon $T$. It suffices to take the same strategy $H$ up to time $t$ and later $0$.
\end{proof}

We now introduce the definition of a consistent price system, similarly as it was done in \cite{JouKal}.

\begin{definition}
We say that a pair $(\tilde{S},\tilde{P})$ is a \emph{consistent price system} (CPS) in the market with bid-ask spreads when $\tilde{P}$ is a probability measure equivalent to $\mathbb{P}$ and $\tilde{S} = (\tilde{S}_t)_{t=0}^T$ is an $d$-dimensional process adapted to the filtration $\mathbb{F}$, which is a $\tilde{P}$-martingale and the following inequalities are satisfied
$$\underline{S}_t^i \leq \tilde{S}_t^i \leq \overline{S}_t^i, \quad \mathbb{P}\text{-a.e.}$$
for all $i =1, \ldots, d$ and $t = 0, \ldots, T$.\\
If the process $\tilde{S}$ is a $\tilde{P}$-supermartingale ($\tilde{P}$-submartingale) then we say that a pair $(\tilde{S},\tilde{P})$ is a \emph{supermartingale consistent price system} (supCPS) (a \emph{submartingale consistent price system} (subCPS) respectively).
\end{definition}

We introduce the notion of the so-called equivalent bid-ask martingale measure, which will play the similar role as an equivalent martingale measure in markets without friction.

\begin{definition}
We shall say that a probability measure $\mathbb{Q}$ is an \emph{equivalent bid-ask martingale measure} (EBAMM) for the bid-ask price process $(\underline{S},\overline{S})$ if $\mathbb{Q} \sim \mathbb{P}$, all $\overline{S}_t$ are integrable and the following inequalities are satisfied
\begin{equation}
\underline{S}_{t-1}^i \leq E_{\mathbb{Q}}(\overline{S}_t^i \vert \mathcal{F}_{t-1}) \quad \text{and} \quad E_{\mathbb{Q}}(\underline{S}_t^i \vert \mathcal{F}_{t-1}) \leq \overline{S}_{t-1}^i, \quad \mathbb{P}\text{-a.e.}
\end{equation}
for any $t = 1, \ldots, T$ and $i = 1, \ldots, d$.
\end{definition}

The interpretation of this measure is quite obvious. Let us consider (EBAMM) in the context of a stock market. If we buy shares at any time $t-1$ at price $\overline{S}_{t-1}^i$ we shouldn't expect, on average, that at time $t$ we sell shares at better price (i.e. at price $\underline{S}_t^i$) than we've bought them previous. On the other hand the analogous situation is if we short sale shares. The following lemma presents the straightforward relation between (CPS) and (EBAMM).

\begin{lemma} \label{CPS_EBAMM}
Assume that there exists (CPS) in the model. Then there exists an equivalent bid-ask martingale measure (EBAMM).
\end{lemma}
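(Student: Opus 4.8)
The plan is to use the measure supplied by the consistent price system itself as the candidate equivalent bid-ask martingale measure. Given a (CPS) $(\tilde{S},\tilde{P})$, I would simply set $\mathbb{Q} := \tilde{P}$. Then $\mathbb{Q} \sim \mathbb{P}$ is immediate, since $\tilde{P} \sim \mathbb{P}$ by the definition of a (CPS). It remains only to verify the two conditional inequalities in the definition of (EBAMM), and these should drop out at once from the martingale property of $\tilde{S}$ combined with the sandwiching $\underline{S}_t^i \leq \tilde{S}_t^i \leq \overline{S}_t^i$.

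Concretely, fix $t \in \{1,\dots,T\}$ and $i \in \{1,\dots,d\}$ and exploit monotonicity of conditional expectation. For the first inequality, from $\tilde{S}_t^i \leq \overline{S}_t^i$ and the martingale identity $E_{\mathbb{Q}}(\tilde{S}_t^i \mid \mathcal{F}_{t-1}) = \tilde{S}_{t-1}^i$ I obtain
\[
E_{\mathbb{Q}}(\overline{S}_t^i \mid \mathcal{F}_{t-1}) \geq E_{\mathbb{Q}}(\tilde{S}_t^i \mid \mathcal{F}_{t-1}) = \tilde{S}_{t-1}^i \geq \underline{S}_{t-1}^i,
\]
which is exactly $\underline{S}_{t-1}^i \leq E_{\mathbb{Q}}(\overline{S}_t^i \mid \mathcal{F}_{t-1})$. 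Symmetrically, from $\underline{S}_t^i \leq \tilde{S}_t^i$ I get
\[
E_{\mathbb{Q}}(\underline{S}_t^i \mid \mathcal{F}_{t-1}) \leq E_{\mathbb{Q}}(\tilde{S}_t^i \mid \mathcal{F}_{t-1}) = \tilde{S}_{t-1}^i \leq \overline{S}_{t-1}^i,
\]
giving the second required inequality. Both computations use only that $\tilde{S}$ is a $\mathbb{Q}$-martingale lying between the bid and ask processes, so the algebra here is entirely routine.

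The one genuinely delicate point — and the step I expect to be the main obstacle — is the integrability clause in the definition of (EBAMM), which demands that every $\overline{S}_t$ be integrable. The martingale $\tilde{S}$ is integrable by definition, and since $0 < \underline{S}_t \leq \tilde{S}_t$ the bid process is integrable too; however $\overline{S}_t \geq \tilde{S}_t$ carries no integrable upper bound, so $\overline{S}_t \in L^1(\mathbb{Q})$ need not hold and the conditional expectations above are a priori only well defined in $[0,+\infty]$. To secure integrability I would pass to an equivalent measure with bounded density, for instance one proportional to $\bigl(1 + \sum_{t,i}\overline{S}_t^i\bigr)^{-1}$, under which every $\overline{S}_t^i$ becomes integrable. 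The subtlety is that such a change of measure destroys the martingale property of $\tilde{S}$ and therefore need not preserve the two inequalities; reconciling integrability with the inequalities is thus the crux. I would accordingly either arrange this integrability already at the level of the (CPS), so that $\mathbb{Q} = \tilde{P}$ works verbatim, or reduce to the integrable case by a preliminary equivalent change of measure under which a consistent price system persists, and only afterwards run the two-line conditional-expectation argument above.
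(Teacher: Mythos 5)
Your two-line computation is exactly the paper's proof of Lemma \ref{CPS_EBAMM}: it takes $\mathbb{Q}:=\tilde{P}$ and derives both inequalities from the martingale identity together with $\underline{S}_t^i \leq \tilde{S}_t^i \leq \overline{S}_t^i$, so your approach is the same and correct. The integrability worry you raise is a fair criticism of the definition rather than of the argument --- the paper's proof silently ignores it, and since $\overline{S}_t^i \geq 0$ the conditional expectation $E_{\mathbb{Q}}(\overline{S}_t^i \mid \mathcal{F}_{t-1})$ is still well defined with values in $[0,+\infty]$, so the displayed inequalities hold as stated without any further change of measure.
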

\begin{proof}
Let $(\tilde{S},\mathbb{Q})$ be a consistent price system. Then for any $t = 1, \ldots, T$ and $i = 1, \ldots, d$ we have the following inequalities:
$$E_{\mathbb{Q}}(\underline{S}_t^i \vert \mathcal{F}_{t-1}) \leq E_{\mathbb{Q}}(\tilde{S}_t^i \vert \mathcal{F}_{t-1}) = \tilde{S}_{t-1}^i \leq \overline{S}_{t-1}^i,$$
$$\underline{S}_{t-1}^i \leq \tilde{S}_{t-1}^i = E_{\mathbb{Q}}(\tilde{S}_t^i \vert \mathcal{F}_{t-1}) \leq E_{\mathbb{Q}}(\overline{S}_t^i \vert \mathcal{F}_{t-1}).$$
\end{proof}

\begin{remark}
The notion of (EBAMM) can be seen as a generalization of an e\-qui\-va\-lent martingale measure (EMM) in markets without friction. Indeed, when we assume that $\underline{S} = \overline{S}$, then our model comes down to a finite discrete time market model without transaction costs and (EBAMM) is actually the same as (EMM). Hence intuitively the notion suggests that if we could consider process $(\underline{S},\overline{S})$ as a whole then such a process should behave similarly to a martingale under an equivalent probability measure with precision to bid-ask spreads.
\end{remark}

\section{Main results}
\label{main results}

At the beginning of this chapter we present the sufficient condition for the absence of arbitrage, which is actually the existence of a consistent price system. This result is standard and well-known but we will prove it in our model using a slightly weaker assumption.

\begin{theorem}\label{th2}
Assume that there exists (supCPS) $(\hat{S},\mathbb{Q})$ and (subCPS) $(\check{S},\mathbb{Q})$. Let us define the set
\begin{equation}
\tilde{R}_T := \lbrace (\hat{H} \cdot \hat{S})_T + (\check{H} \cdot \check{S})_T \; \vert \; \hat{H} \in \mathcal{P}_T^+, \check{H} \in \mathcal{P}_T^- \rbrace.
\end{equation}
Then $\tilde{R}_T \cap L_+^0 = \lbrace 0 \rbrace$ and we have the absence of arbitrage in our model, i.e. $\mathcal{A}_T \cap L_+^0 = \lbrace 0 \rbrace$.
\end{theorem}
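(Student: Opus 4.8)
The plan is to establish the two assertions separately: first the auxiliary no-arbitrage statement $\tilde{R}_T \cap L_+^0 = \lbrace 0 \rbrace$, and then to deduce (NA) by dominating every attainable frictional payoff $x_T(H)$ by an element of $\tilde{R}_T$. So the first step is to show that a nonnegative element of $\tilde{R}_T$ must be null, and the second is to produce, for each $H \in \mathcal{P}_T$, a pointwise majorant of $x_T(H)$ lying in $\tilde{R}_T$.

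For the first assertion I would fix $\hat{H} \in \mathcal{P}_T^+$, $\check{H} \in \mathcal{P}_T^-$ and set $G_t := (\hat{H} \cdot \hat{S})_t + (\check{H} \cdot \check{S})_t$, so that $G_0 = 0$ and the goal is to show $G_T \geq 0$ forces $G_T = 0$. The mechanism is conditioning increment by increment: since $\hat{H}_t$ is $\mathcal{F}_{t-1}$-measurable and nonnegative and $\hat{S}$ is a $\mathbb{Q}$-supermartingale, $E_{\mathbb{Q}}(\hat{H}_t \cdot \Delta \hat{S}_t \vert \mathcal{F}_{t-1}) = \hat{H}_t \cdot E_{\mathbb{Q}}(\Delta \hat{S}_t \vert \mathcal{F}_{t-1}) \leq 0$, and symmetrically $E_{\mathbb{Q}}(\check{H}_t \cdot \Delta \check{S}_t \vert \mathcal{F}_{t-1}) \leq 0$ because $\check{H}_t \leq 0$ and $\check{S}$ is a $\mathbb{Q}$-submartingale. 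Hence every increment of $G$ has nonpositive $\mathbb{Q}$-conditional mean.

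The main obstacle is that the predictable integrands need not be bounded, so these products need not be $\mathbb{Q}$-integrable and one cannot naively take expectations. I would resolve this by localizing on the $\mathcal{F}_{t-1}$-measurable sets $\lbrace |\hat{H}_t| \leq n,\ |\check{H}_t| \leq n \rbrace$, on which the increments are integrable, combined with a backward induction on the horizon. Peeling off the last increment and conditioning on $\mathcal{F}_{T-1}$ on these sets shows that $G_T \geq 0$ forces $G_{T-1} \geq 0$, after which the induction hypothesis reduces the problem to the one-step situation; there, nonnegativity together with a nonpositive conditional mean forces the localized increment to vanish, and letting $n \to \infty$ yields $G_T = 0$. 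Since $\mathbb{Q} \sim \mathbb{P}$, this gives $\tilde{R}_T \cap L_+^0 = \lbrace 0 \rbrace$.

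For the absence of arbitrage I would show $\mathcal{R}_T \subseteq \tilde{R}_T - L_+^0$, whence $\mathcal{A}_T \subseteq \tilde{R}_T - L_+^0$ and, together with the first assertion, $\mathcal{A}_T \cap L_+^0 = \lbrace 0 \rbrace$. The crucial pointwise estimate is that for an arbitrary $H \in \mathcal{P}_T$ the predictable choices $\hat{H}_t := (H_t)^+ \in \mathcal{P}_T^+$ and $\check{H}_t := -(H_t)^- \in \mathcal{P}_T^-$ satisfy
$$ x_T(H) \leq (\hat{H} \cdot \hat{S})_T + (\check{H} \cdot \check{S})_T. $$
I would prove this by applying summation by parts to each stochastic integral and comparing term by term with the definition of $x_T(H)$, using $\underline{S}_t^i \leq \hat{S}_t^i, \check{S}_t^i \leq \overline{S}_t^i$. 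The terminal liquidation terms are dominated because selling a long position at the bid $\underline{S}_T$ and covering a short at the ask $\overline{S}_T$ are costlier than marking at $\hat{S}_T$ and $\check{S}_T$ respectively, and each rebalancing term, split according to the four sign combinations of $(H_{t-1}^i, H_t^i)$, collapses to a product of a sign-constrained position and a price gap of the matching sign, hence is nonnegative. I expect this case analysis to be the main computational content, while the integrability issue in the first assertion is the main conceptual obstacle.
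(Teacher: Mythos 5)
Your proposal is correct and follows essentially the same two-step strategy as the paper: first prove $\tilde{R}_T \cap L_+^0 = \lbrace 0 \rbrace$ by taking conditional expectations of the increments (the paper handles the integrability obstacle by reducing to bounded strategies via its one-step-model Lemma~\ref{arb_bounded}, you by localizing on the $\mathcal{F}_{t-1}$-measurable sets $\lbrace \Vert \hat{H}_t \Vert \leq n, \Vert \check{H}_t \Vert \leq n \rbrace$ and inducting on the horizon --- two standard variants of the same device), and then dominate $x_T(H)$ pointwise by an element of $\tilde{R}_T$. Your choice $\hat{H}_t = (H_t)^+$, $\check{H}_t = -(H_t)^-$ is exactly the decomposition the paper builds incrementally through its four-case definition of $\Delta\hat{H}_t$, $\Delta\check{H}_t$, and the term-by-term comparison using $\underline{S}_t^i \leq \hat{S}_t^i, \check{S}_t^i \leq \overline{S}_t^i$ is the same.
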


\begin{lemma} \label{arb_bounded}
Let us define
\begin{equation}
\tilde{R}_T^{b} := \lbrace (\hat{H} \cdot \hat{S})_T + (\check{H} \cdot \check{S})_T \; \vert \; \hat{H} \in \mathcal{P}_T^+, \check{H} \in \mathcal{P}_T^- \; \text{where} \; \hat{H}, \check{H} \; \text{are bounded} \rbrace.
\end{equation}
Then the condition $\tilde{R}_T \cap L_+^0 = \lbrace 0 \rbrace$ is equivalent to $\tilde{R}_T^{b} \cap L_+^0 = \lbrace 0 \rbrace$.
\end{lemma}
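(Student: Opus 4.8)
The forward implication is immediate: since $\tilde{R}_T^{b} \subseteq \tilde{R}_T$, the equality $\tilde{R}_T \cap L_+^0 = \lbrace 0 \rbrace$ forces $\tilde{R}_T^{b} \cap L_+^0 = \lbrace 0 \rbrace$ a fortiori. The whole content lies in the converse, which I plan to prove constructively: from any nonzero element of $\tilde{R}_T \cap L_+^0$ I will manufacture a nonzero element of $\tilde{R}_T^{b} \cap L_+^0$. The argument is a backward induction on the horizon $T$, in the spirit of the Dalang--Morton--Willinger reduction of multiperiod no-arbitrage to single-period statements. Two elementary facts are used throughout: for $A \in \mathcal{F}_{t-1}$ and $\mathcal{F}_{t-1}$-measurable $\xi$ one has $(\xi \mathbf{1}_A)\cdot \Delta\hat{S}_t = (\xi \cdot \Delta\hat{S}_t)\mathbf{1}_A$ (the scalar indicator factors out of the inner product), and multiplying a nonnegative (resp.\ nonpositive) predictable integrand by an indicator preserves both its sign and its predictability. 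Note that the (super/sub)martingale property of $\hat{S},\check{S}$ plays no role here; the equivalence is purely structural.

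For the base case $T=1$ (a single date $t$), suppose $w := \hat{\xi}\cdot\Delta\hat{S}_t + \check{\xi}\cdot\Delta\check{S}_t \geq 0$ $\mathbb{P}$-a.e.\ with $w \neq 0$, where $\hat{\xi}\geq 0$, $\check{\xi}\leq 0$ are $\mathcal{F}_{t-1}$-measurable. Put $A^n := \lbrace |\hat{\xi}|\leq n,\ |\check{\xi}|\leq n\rbrace \in \mathcal{F}_{t-1}$, so that $A^n \uparrow \Omega$. The truncated integrands $\hat{\xi}\mathbf{1}_{A^n}$, $\check{\xi}\mathbf{1}_{A^n}$ are bounded, keep their signs, and by the factorization fact produce the gain $w\mathbf{1}_{A^n}\geq 0$; since $\mathbb{P}(\lbrace w>0\rbrace \cap A^n)\uparrow \mathbb{P}(w>0)>0$, for $n$ large this is a nonzero bounded single-date arbitrage, which I embed into horizon $T$ by padding the strategy with zeros at all other dates, landing in $\tilde{R}_T^{b}\cap L_+^0 \setminus \lbrace 0\rbrace$.

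For the inductive step, write a nonzero $G_T = \sum_{t=1}^T (\hat{H}_t \cdot \Delta\hat{S}_t + \check{H}_t \cdot \Delta\check{S}_t) \geq 0$ and split off the last date as $G_T = G_{T-1} + Z_T$, where $G_{T-1}$ is $\mathcal{F}_{T-1}$-measurable and $Z_T = \hat{H}_T\cdot\Delta\hat{S}_T + \check{H}_T\cdot\Delta\check{S}_T$. On the set $\lbrace G_{T-1}<0\rbrace \in \mathcal{F}_{T-1}$ the inequality $G_T\geq 0$ forces $Z_T>0$, so the truncated last-date holdings $\hat{H}_T\mathbf{1}_{\lbrace G_{T-1}<0\rbrace}$, $\check{H}_T\mathbf{1}_{\lbrace G_{T-1}<0\rbrace}$ give, via the factorization fact, a nonnegative single-date gain that is strictly positive exactly on $\lbrace G_{T-1}<0\rbrace$. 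Hence if $\mathbb{P}(G_{T-1}<0)>0$ I obtain a single-date arbitrage at date $T$ and conclude by the base case; otherwise $G_{T-1}\geq 0$ $\mathbb{P}$-a.e., and either $G_{T-1}\neq 0$, so $G_{T-1}$ is a nonzero element of the set $\tilde{R}_{T-1}$ defined analogously for horizon $T-1$ and the induction hypothesis supplies a bounded arbitrage (padded by a zero at date $T$), or $G_{T-1}=0$ $\mathbb{P}$-a.e., in which case $Z_T = G_T$ is itself a nonzero single-date arbitrage at date $T$, again handled by the base case.

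The step to watch is precisely this backward reduction. The naive alternative---truncating all the $\hat{H}_t,\check{H}_t$ at once and passing to a limit---either destroys predictability (a global truncation event is only $\mathcal{F}_{T-1}$-measurable) or destroys the nonnegativity of the total gain $G_T$ under a coordinatewise truncation, forcing one to control the convergence $G_T^n \to G_T$ and thereby reintroducing exactly the integrability of unbounded strategies that this lemma is meant to sidestep. The backward induction avoids this because at each date the splitting indicator $\mathbf{1}_{\lbrace G_{T-1}<0\rbrace}$ is $\mathcal{F}_{T-1}$-measurable and factors through the inner product, so sign, predictability, and the $\mathbb{P}$-a.e.\ sign of the single-date gain are all preserved exactly, with no limiting argument and no appeal to any martingale property.
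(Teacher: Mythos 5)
Your proof is correct and takes essentially the same route as the paper's: the paper likewise reduces the multi-period condition to one-step arbitrages (via a minimal-time argument borrowed from Kabanov--Safarian, which is your backward induction in different clothing) and then truncates the one-step integrand by the indicator of $\lbrace \Vert H_t \Vert \leq n \rbrace$, exactly as you do with $A^n$. No substantive difference to report.
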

\begin{proof}
Notice that the condition $\tilde{R}_T \cap L_+^0 = \lbrace 0 \rbrace$ is equivalent to the absence of arbitrage for any one-step model, i.e. in our notation
\begin{equation}
\lbrace \hat{\eta} \Delta \hat{S}_t + \check{\eta} \Delta \check{S}_t \colon \hat{\eta}, - \check{\eta} \in L_+^0(\mathcal{F}_{t-1}) \rbrace \cap L_0^+(\mathcal{F}_t) = \lbrace 0 \rbrace
\end{equation}
for any $t = 1, \ldots, T$. We will use the analogous reasoning as in \cite{KabSaf}, see chapter $2.1.1$. Assume that we have the absence of arbitrage in any one-step model. We show that there is no arbitrage. Take the smallest $t \leq T$ such that $\tilde{R}_t \cap L_+^0(\mathcal{F}_t) \neq \lbrace 0 \rbrace$ and notice that $1 < t < T$. Hence there exist two strategies $\hat{H} \in \mathcal{P}_t^+$, $\check{H} \in \mathcal{P}_t^-$ such that
\begin{equation}
(\hat{H} \cdot \hat{S})_t + (\check{H} \cdot \check{S})_t \geq 0 \quad \text{and} \quad \mathbb{P}((\hat{H} \cdot \hat{S})_t + (\check{H} \cdot \check{S})_t > 0) > 0.
\end{equation}
Due to the choice of $t$, either the set $\Gamma^{'} := \lbrace (\hat{H} \cdot \hat{S})_{t-1} + (\check{H} \cdot \check{S})_{t-1} < 0 \rbrace$ is of strictly positive probability (we put $\hat{\eta} := 1{\hskip -4.0 pt}\hbox{1}_{\Gamma^{'}}\hat{H}_t$, $\check{\eta} := 1{\hskip -4.0 pt}\hbox{1}_{\Gamma^{'}}\check{H}_t$) or the set $\Gamma^{''} := \lbrace (\hat{H} \cdot \hat{S})_{t-1} + (\check{H} \cdot \check{S})_{t-1} = 0 \rbrace$ is of full measure (we take $\hat{\eta} := 1{\hskip -4.0 pt}\hbox{1}_{\Gamma^{''}}\hat{H}_t$, $\check{\eta} := 1{\hskip -4.0 pt}\hbox{1}_{\Gamma^{''}}\check{H}_t$). In any case we have a contradiction. Therefore we can assume that there exists $H_t \in L^0(\mathbb{R}^d,\mathcal{F}_{t-1})$ satisfying the following conditions
\begin{equation} \label{A1}
H_t \Delta \tilde{S}_t \geq 0, \; \mathbb{P}\text{-a.e.} \quad \text{and} \quad \mathbb{P}(H_t \Delta \tilde{S}_t > 0) > 0.
\end{equation}
It suffices to show that there exists $\tilde{H}_t \in L^0(\mathbb{R}^d,\mathcal{F}_{t-1})$, which is bounded and satisfies the condition (\ref{A1}).
One can take
\begin{displaymath}
\tilde{H}_t := \left\{ \begin{array}{ll}
\frac{H_t}{\Vert H_t \Vert} \quad H_t \neq 0,\\
0 \qquad \; H_t=0.
\end{array} \right.
\end{displaymath}
It is also possible to use the arguments from \cite{KabSaf} (chapter $2.1.1.$). Let us define the sequence $H_t^n := H_t 1{\hskip -4.0 pt}\hbox{1}_{\lbrace \Vert H_t \Vert \leq n \rbrace}$. Then there exists sufficiently large $n \in \mathbb{N}$ such that $H_t^n$ satisfies (\ref{A1}).
\end{proof}

\begin{proof}[Proof of Theorem \ref{th2}]
By Lemma \ref{arb_bounded} it suffices to prove that $\tilde{R}_T^{b} \cap L_+^0 = \lbrace 0 \rbrace$. Let $X = (\hat{H} \cdot \hat{S})_T + (\check{H} \cdot \check{S})_T \in \tilde{R}_T^{b} \cap L_+^0$. Hence $(\hat{H} \cdot \hat{S})_T + (\check{H} \cdot \check{S})_T \geq 0$ and in particular $H$ is a bounded strategy. We show that $E_{\mathbb{Q}}[(\hat{H} \cdot \hat{S})_T + (\check{H} \cdot \check{S})_T] \leq 0$. Using the assumption that $\hat{S}$ is a $\mathbb{Q}$-supermartingale and $\check{S}$ a $\mathbb{Q}$-submartingale we get $E_{\mathbb{Q}}(\hat{H}_t \Delta \hat{S}_t \vert \mathcal{F}_{t-1}) = \hat{H}_t E_{\mathbb{Q}}(\Delta \hat{S}_t \vert \mathcal{F}_{t-1}) \leq 0$. Analogously $E_{\mathbb{Q}}(\check{H}_t \Delta \check{S}_t \vert \mathcal{F}_{t-1}) \leq 0$. Summing up
\begin{equation}
E_{\mathbb{Q}}[(\hat{H} \cdot \hat{S})_T + (\check{H} \cdot \check{S})_T] \leq 0.
\end{equation}
Hence $X = 0$, $\mathbb{Q}$-a.e. and from the equivalence of measures $X = 0$, $\mathbb{P}$-a.e. We show now that $\mathcal{A}_T \cap L_+^0 = \lbrace 0 \rbrace$. Take any $\xi \in \mathcal{A}_T \cap L_+^0$. Then the following inequalities are satisfied:
$$0 \leq \xi \leq - \sum_{t=1}^T (\Delta H_t)^+ \overline{S}_{t-1} + \sum_{t=1}^T (\Delta H_t)^- \underline{S}_{t-1} + (H_T)^+ \underline{S}_T - (H_T)^- \overline{S}_T.$$
Let us notice that for any strategy $H \in \mathcal{P}_T$ there exist strategies $\hat{H} \in \mathcal{P}_T^+$ and $\check{H} \in \mathcal{P}_T^-$ such that $\Delta H_t^i = \Delta \hat{H}_t^i + \Delta \check{H}_t^i$. We can construct them as follows:
$$\text{if} \quad H_t^i \geq 0 \; \text{on the set} \; \lbrace H_{t-1}^i \geq 0 \rbrace \quad \text{then} \quad \Delta \hat{H}_t^i := \Delta H_t^i, \; \Delta \check{H}_t^i := 0,$$
$$\text{if} \quad H_t^i < 0 \; \text{on the set} \; \lbrace H_{t-1}^i < 0 \rbrace \quad \text{then} \quad \Delta \hat{H}_t^i := 0, \; \Delta \check{H}_t^i := \Delta H_t^i,$$
$$\text{if} \quad H_t^i \geq 0 \; \text{on the set} \; \lbrace H_{t-1}^i < 0 \rbrace \quad \text{then} \quad \Delta \hat{H}_t^i := H_t^i, \; \Delta \check{H}_t^i := - H_{t-1}^i,$$
$$\text{if} \quad H_t^i < 0 \; \text{on the set} \; \lbrace H_{t-1}^i \geq 0 \rbrace \quad \text{then} \quad \Delta \hat{H}_t^i := - H_{t-1}^i, \; \Delta \check{H}_t^i := H_t^i.$$
It means that we split the strategy into another two strategies, which consist of long and short positions only. Moreover, when $\Delta H_t \geq 0$ then $\Delta \hat{H}_t^i, \Delta \check{H}_t^i \geq 0$ and on the other hand if $\Delta H_t < 0$ then $\Delta \hat{H}_t^i, \Delta \check{H}_t^i < 0$. Hence we always have $(\Delta H_t^i)^+ = (\Delta \hat{H}_t^i)^+ + (\Delta \check{H}_t^i)^+$ as well as $(\Delta H_t^i)^- = (\Delta \hat{H}_t^i)^- + (\Delta \check{H}_t^i)^-$. Summing up, we get
$$\xi \leq - \sum_{t=1}^T (\Delta H_t)^+ \overline{S}_{t-1} + \sum_{t=1}^T (\Delta H_t)^- \underline{S}_{t-1} + (H_T)^+ \underline{S}_T - (H_T)^- \overline{S}_T =$$
$$= - \sum_{t=1}^T (\Delta \hat{H}_t)^+ \overline{S}_{t-1} + \sum_{t=1}^T (\Delta \hat{H}_t)^- \underline{S}_{t-1} + (\hat{H}_T)^+ \underline{S}_T - (\hat{H}_T)^- \overline{S}_T +$$
$$- \sum_{t=1}^T (\Delta \check{H}_t)^+ \overline{S}_{t-1} + \sum_{t=1}^T (\Delta \check{H}_t)^- \underline{S}_{t-1} + (\check{H}_T)^+ \underline{S}_T - (\check{H}_T)^- \overline{S}_T =: (\star)$$
Notice that the following inequalities are satisfied
$$\underline{S}_t^i \leq \hat{S}_t^i \leq \overline{S}_t^i \quad \text{and} \quad \underline{S}_t^i \leq \check{S}_t^i \leq \overline{S}_t^i, \quad \mathbb{P}\text{-a.e.}$$
for any $t=0, \ldots, T$ and $i=1, \ldots, d$. Hence we can write the next inequality, i.e.
$$(\star) \leq - \sum_{t=1}^T (\Delta \hat{H}_t)^+ \hat{S}_{t-1} + \sum_{t=1}^T (\Delta \hat{H}_t)^- \hat{S}_{t-1} + (\hat{H}_T)^+ \hat{S}_T - (\hat{H}_T)^- \hat{S}_T +$$
$$- \sum_{t=1}^T (\Delta \check{H}_t)^+ \check{S}_{t-1} + \sum_{t=1}^T (\Delta \check{H}_t)^- \check{S}_{t-1} + (\check{H}_T)^+ \check{S}_T - (\check{H}_T)^- \check{S}_T=$$
$$= - \sum_{t=1}^T\limits \Delta \hat{H}_t \hat{S}_{t-1} + \hat{H}_T \hat{S}_T - \sum_{t=1}^T\limits \Delta \check{H}_t \check{S}_{t-1} + \check{H}_T \check{S}_T = (\hat{H} \cdot \hat{S})_T + (\check{H} \cdot \check{S})_T.$$
Then
\begin{equation}
0 \leq \xi \leq (\hat{H} \cdot \hat{S})_T + (\check{H} \cdot \check{S})_T.
\end{equation}
Due to the condition $\tilde{R}_T \cap L_+^0 = \lbrace 0 \rbrace$ we get $(\hat{H} \cdot \hat{S})_T + (\check{H} \cdot \check{S})_T = 0$, $\mathbb{P}$-a.e. and hence $\xi = 0$, $\mathbb{P}$-a.e.
\end{proof}

\begin{remark}
Notice that if there exists (CPS) in the market then the assumptions of Theorem \ref{th2} are also satisfied and we have the absence of arbitrage in our model, i.e. $\mathcal{A}_T \cap L_+^0 = \lbrace 0 \rbrace$.
\end{remark}

Before we formulate the Fundamental theorem we present and prove some technical lemmas, which will play a role in our theory.

\begin{lemma} \label{lem_A_F}
For any $t=1, \ldots, T$ holds the inclusion $F_t \subset \mathcal{A}_t$.
\end{lemma}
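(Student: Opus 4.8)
The plan is to show that every generator of $\mathbb{F}_t$ is realized as the terminal value $x_t(H)$ of an explicit elementary strategy, and then to pass from single generators to arbitrary finite sums by a superadditivity property of the map $H \mapsto x_t(H)$; the extra $-L_+^0(\mathcal{F}_t)$ occurring in both $F_t$ and $\mathcal{A}_t$ is then harmless.

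First I would treat the generators individually. Given $G \in L_+^0(\mathbb{R}^d,\mathcal{F}_j)$ with $j<t$, define the predictable strategy $H_s := G\,\mathbf{1}_{\{j<s\le t\}}$; its only nonzero increment is $\Delta H_{j+1}=G\ge 0$ and its terminal position is $H_t=G\ge 0$, so substituting into the definition of $x_t$ collapses all terms to $x_t(H) = -G\cdot\overline{S}_j + G\cdot\underline{S}_t = G\cdot(\underline{S}_t-\overline{S}_j)$. Thus every element of $\mathcal{R}_{j,t}^+$ lies in $\mathcal{R}_t$; the same computation with $K\le 0$ gives $x_t(K\,\mathbf{1}_{\{j<s\le t\}}) = K\cdot(\overline{S}_t-\underline{S}_j)$, so every element of $\mathcal{R}_{j,t}^-$ also lies in $\mathcal{R}_t$. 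Measurability is immediate, since $G,K$ are $\mathcal{F}_j$-measurable and $s\ge j+1$.

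The crux is to add these elementary strategies without losing the domination by $x_t$. I claim $x_t$ is \emph{superadditive}, i.e. $x_t(H'+H'')\ge x_t(H')+x_t(H'')$. To see this, write $x_t(H) = \sum_{s=1}^t g_s(\Delta H_s) + \psi(H_t)$, where each coordinate of $g_s(a) = -a^+\cdot\overline{S}_{s-1}+a^-\cdot\underline{S}_{s-1}$ and of $\psi(h) = h^+\cdot\underline{S}_t - h^-\cdot\overline{S}_t$ is, for fixed $\omega$, a piecewise-linear function that is positively homogeneous and, because $\underline{S}\le\overline{S}$, concave (its slope drops as the argument increases through $0$). A positively homogeneous concave function $\phi$ is superadditive, since $\phi(a+b)=2\phi(\tfrac{a+b}{2})\ge\phi(a)+\phi(b)$; summing over coordinates, over the periods $s$, and over the terminal term, and using the linearity of $H\mapsto\Delta H$, yields superadditivity of $x_t$, whence $x_t(\sum_k H^{(k)})\ge\sum_k x_t(H^{(k)})$ for any finite family by induction.

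Finally I would assemble everything. An arbitrary element of $F_t=\mathbb{F}_t-L_+^0(\mathcal{F}_t)$ has the form $f-r$ with $f=\sum_{j<t}G_j\cdot(\underline{S}_t-\overline{S}_j)+\sum_{j<t}K_j\cdot(\overline{S}_t-\underline{S}_j)$ (where $G_j\ge 0$, $K_j\le 0$ are $\mathcal{F}_j$-measurable) and $r\in L_+^0(\mathcal{F}_t)$. Setting $H:=\sum_{j<t}(G_j+K_j)\mathbf{1}_{\{j<s\le t\}}\in\mathcal{P}_t$, superadditivity together with the first step gives $x_t(H)\ge\sum_{j<t}x_t(G_j\mathbf{1})+\sum_{j<t}x_t(K_j\mathbf{1})=f$. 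Hence $x_t(H)-f\in L_+^0(\mathcal{F}_t)$, and therefore $f-r = x_t(H)-[(x_t(H)-f)+r]\in\mathcal{R}_t-L_+^0(\mathcal{F}_t)=\mathcal{A}_t$, which proves $F_t\subset\mathcal{A}_t$. I expect the superadditivity step, with the correct direction of the inequality, to be the main obstacle: $x_t$ is genuinely nonlinear under bid-ask spreads, and the naive attempt to read off subadditivity from concavity alone is wrong; the resolution is that the relevant cost and payoff functions are simultaneously concave \emph{and} positively homogeneous, which forces superadditivity — financially, merging a long and a short program can only save on the spread.
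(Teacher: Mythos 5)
Your proof is correct, and it reaches the inclusion by a route that is organized differently from the paper's. The paper works with the aggregated coefficients $\theta_j,\tilde\theta_j$ of a generic element of $\mathbb{F}_t$, first replaces them by disjointly supported $\vartheta_j,\tilde\vartheta_j$ (noting $\Pi\le\Xi$), then builds a single strategy $H$ from the increments $\Delta H_j=\vartheta_j-\tilde\vartheta_j$ and checks by an explicit computation that the terminal liquidation discrepancy $r=(H_t^+-\sum_j(\Delta H_j)^+)(\underline{S}_t-\overline{S}_t)$ is nonnegative. You instead realize each generator of $\mathcal{R}_{j,t}^+$ and $\mathcal{R}_{j,t}^-$ as $x_t$ of an elementary buy-and-hold (resp.\ short-and-hold) strategy and then invoke a single structural fact, the superadditivity of $H\mapsto x_t(H)$, which you correctly derive from the concavity and positive homogeneity of the per-period cost functions $g_s$ and the liquidation functional $\psi$ (a positively homogeneous function is superadditive iff it is concave; the inequalities $\underline{S}\le\overline{S}$ give exactly the required drop in slope through $0$). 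The two arguments encode the same financial fact --- netting a long against a short position can only save the spread --- but your version isolates it as a reusable lemma and avoids the explicit disjointification and the bookkeeping identity $H_t^+-\sum_j(\Delta H_j)^+=H_t^--\sum_j(\Delta H_j)^-$, at the price of being slightly less explicit about the slack $r$; the paper's computation, conversely, exhibits the dominating $x_t(H)$ and the residuals $r,\tilde r$ in closed form. Both the measurability of your assembled strategy $H_s=\sum_{j<s}(G_j+K_j)$ and the final rewriting $f-r=x_t(H)-[(x_t(H)-f)+r]\in\mathcal{A}_t$ are in order.
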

\begin{proof}
Notice that it suffices to show that $\mathbb{F}_t \subset \mathcal{A}_t$ where $\mathbb{F}_t$ is defined as in (\ref{F_sum_R}). Take any $\Pi \in \mathbb{F}_t$. By definition we can assume that is of the form
$$\Pi = - \sum_{j=1}^t \theta_j \overline{S}_{j-1} + \sum_{j=1}^t \tilde{\theta}_j \underline{S}_{j-1} + \sum_{j=1}^t \theta_j \underline{S}_t - \sum_{j=1}^t \tilde{\theta}_j \overline{S}_t,$$
where $\Theta = (\theta_j)_{j=1}^t$, $\tilde{\Theta} = (\tilde{\theta}_j)_{j=1}^t$ are predictable and non-negative processes. Notice that there exist maybe another predictable and non-negative processes $\vartheta = (\vartheta_j)_{j=1}^t$, $\tilde{\vartheta} = (\tilde{\vartheta}_j)_{j=1}^t$ such that for any $j=1, \ldots, t$ we have $\lbrace \vartheta_j > 0, \; \tilde{\vartheta}_j > 0 \rbrace = \emptyset$, a.e. and the following inequality is satisfied
$$\Pi \leq \Xi := - \sum_{j=1}^t \vartheta_j \overline{S}_{j-1} + \sum_{j=1}^t \tilde{\vartheta}_j \underline{S}_{j-1} + \sum_{j=1}^t \vartheta_j \underline{S}_t - \sum_{j=1}^t \tilde{\vartheta}_j \overline{S}_t, \quad \mathbb{P}\text{-a.e.}$$
Let us define the strategy $H = (H_j)_{j=1}^t \in \mathcal{P}_t$ as follows
$$\Delta H_j := (\Delta H_j)^+ - (\Delta H_j)^- \; \text{where} \quad (\Delta H_j)^+ := \vartheta_j \quad \text{and} \quad (\Delta H_j)^- := \tilde{\vartheta}_j.$$
Moreover, we put $H_1 := \Delta H_1$ and $H_j := \Delta H_j + H_{j-1}$ for $j>1$. Notice that $H$ is a well defined strategy. Furthermore
$$\sum_{j=1}^t (\Delta H_j)^- \overline{S}_t - \sum_{j=1}^t (\Delta H_j)^+ \underline{S}_t + H_t^+ \underline{S}_t - H_t^- \overline{S}_t =$$
$$= (H_t^+ - \sum_{j=1}^t (\Delta H_j)^+) \underline{S}_t - (H_t^- - \sum_{j=1}^t (\Delta H_j)^-) \overline{S}_t$$
and the following equalities are satisfied
$$\sum_{j=1}^t (\Delta H_j)^+ - \sum_{j=1}^t (\Delta H_j)^- = \sum_{j=1}^t \Delta H_j = H_t = H_t^+ - H_t^-.$$
Therefore we have $H_t^+ - \sum\limits_{j=1}^t (\Delta H_j)^+ = H_t^- - \sum\limits_{j=1}^t (\Delta H_j)^-$ and $H_t^+ \leq \sum\limits_{j=1}^t (\Delta H_j)^+$. Let us define the random variable $r := (H_t^+ - \sum\limits_{j=1}^t (\Delta H_j)^+) \underline{S}_t - (H_t^- - \sum\limits_{j=1}^t (\Delta H_j)^-) \overline{S}_t$. By the previous observation $r = (H_t^+ - \sum\limits_{j=1}^t (\Delta H_j)^+)(\underline{S}_t - \overline{S}_t) \geq 0$ what simply means that $r \in L_+^0(\mathcal{F}_t)$. Hence
$$\Pi + r \leq \Xi +r = - \sum_{j=1}^t (\Delta H_j)^+ \overline{S}_{j-1} + \sum_{j=1}^t (\Delta H_j)^- \underline{S}_{j-1} + (H_t)^+ \underline{S}_t - (H_t)^- \overline{S}_t.$$
Obviously $\Xi +r = x_t(H) \in \mathcal{R}_t$ and $\Pi \leq x_t(H) - r$. Furthermore there exists a random variable $\tilde{r} \in L_+^0(\mathcal{F}_t)$ such that $\Pi = x_t(H) - r - \tilde{r}$. It suffices to define $\tilde{r} := \Xi - \Pi$. Hence we get that $\Pi \in \mathcal{A}_t$.
\end{proof}

\begin{remark}
It is not clear whether $\Lambda_T \subset \mathcal{A}_T$ or not. We only know that $\Lambda_T \subset \sum\limits_{t=1}^T \mathcal{A}_t$.
\end{remark}

\begin{remark} \label{rem_A_F}
Notice that for any $\Pi \in F_T$ (respectively $\mathbb{F}_T$) there exists a strategy $H \in \mathcal{P}_T$ and a random variable $r \in L_+^0$ such that $\Pi = x_T(H) - r$.
\end{remark}

\begin{lemma} \label{lem_Ftk}
For any $1 \leq t \leq t+k \leq T$ the following inclusions hold $F_{t-1,t+k} \subset F_{t+k} \subset \mathcal{A}_{t+k}$ and for any $x \in F_{t-1,t+k}$ there exists $H_t \in L^0(\mathbb{R}^d,\mathcal{F}_{t-1})$ and $r \in L_+^0(\mathcal{F}_{t+k})$ such that $x = x_{t-1,t+k}(H_t) - r$.
\end{lemma}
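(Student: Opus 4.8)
The plan is to split the statement into the two chain inclusions and the explicit representation, treating the representation as the real content and deriving the inclusions from it together with the already-established Lemma~\ref{lem_A_F}.

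For the first inclusion I would argue purely from the summation structure. Since $t \ge 1$ and $k \ge 0$ we have $t-1 < t+k$, so the index $j = t-1$ occurs among the summation indices $j < t+k$ defining $\mathbb{F}_{t+k}$ in (\ref{F_sum_R}). Because every $\mathcal{R}_{j,t+k}^{\pm}$ is a convex cone and in particular contains $0$, any element of $\mathcal{R}_{t-1,t+k}^{+} + \mathcal{R}_{t-1,t+k}^{-}$ is obtained from $\sum_{j<t+k}\mathcal{R}_{j,t+k}^{+} + \sum_{j<t+k}\mathcal{R}_{j,t+k}^{-}$ by taking all summands with $j \ne t-1$ equal to zero. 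Hence $\mathbb{F}_{t-1,t+k} \subset \mathbb{F}_{t+k}$, and subtracting the common cone $L_+^0(\mathcal{F}_{t+k})$ gives $F_{t-1,t+k} \subset F_{t+k}$. The second inclusion $F_{t+k} \subset \mathcal{A}_{t+k}$ is exactly Lemma~\ref{lem_A_F} read with the time horizon $t+k$, so nothing new is needed there.

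For the representation I would take $x \in F_{t-1,t+k}$ and write $x = \Pi - s$ with $\Pi \in \mathbb{F}_{t-1,t+k}$ and $s \in L_+^0(\mathcal{F}_{t+k})$. By definition $\Pi = a \cdot (\underline{S}_{t+k} - \overline{S}_{t-1}) - c \cdot (\overline{S}_{t+k} - \underline{S}_{t-1})$ for some $a, c \in L_+^0(\mathbb{R}^d,\mathcal{F}_{t-1})$ (here $c = -b$, with $b$ the coefficient coming from $\mathcal{R}_{t-1,t+k}^{-}$). The obstacle is that $a$ and $c$ may both be strictly positive in the same coordinate, whereas a single $H_t$ cannot have $(H_t^i)^+$ and $(H_t^i)^-$ both positive; this is precisely the disjoint-support difficulty resolved by the $\vartheta,\tilde{\vartheta}$ device in the proof of Lemma~\ref{lem_A_F}. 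I would resolve it by netting coordinate-wise: set $H_t := a - c$, so that $(H_t)^+ = a - m$ and $(H_t)^- = c - m$ with $m := \min(a,c)$ taken componentwise (an $\mathcal{F}_{t-1}$-measurable vector). Recalling that $x_{t-1,t+k}(H) = (H)^+ \cdot (\underline{S}_{t+k} - \overline{S}_{t-1}) - (H)^- \cdot (\overline{S}_{t+k} - \underline{S}_{t-1})$, a direct computation gives
\[
x_{t-1,t+k}(H_t) - \Pi = m \cdot \big[(\overline{S}_{t+k} - \underline{S}_{t+k}) + (\overline{S}_{t-1} - \underline{S}_{t-1})\big] =: r',
\]
and $r' \ge 0$ because $m \ge 0$ and $\underline{S} \le \overline{S}$ componentwise, while $r'$ is clearly $\mathcal{F}_{t+k}$-measurable. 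Thus $\Pi = x_{t-1,t+k}(H_t) - r'$, and setting $r := r' + s \in L_+^0(\mathcal{F}_{t+k})$ yields $x = x_{t-1,t+k}(H_t) - r$, as required. The main effort, exactly as in Lemma~\ref{lem_A_F}, lies in this netting step and in checking that the round-trip bid-ask loss $r'$ is nonnegative so that it can be absorbed into $r$; everything else is bookkeeping. I would also remark that this representation yields an independent route to $F_{t-1,t+k} \subset \mathcal{A}_{t+k}$, since $x_{t-1,t+k}(H_t)$ coincides with $x_{t+k}(H)$ for the strategy $H$ that is $0$ before time $t$ and holds the fixed position $H_t$ on $\{t,\ldots,t+k\}$, hence belongs to $\mathcal{R}_{t+k}$.
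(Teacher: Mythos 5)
Your proof is correct and follows essentially the same route as the paper: decompose $x = \Pi - s$, net the simultaneous long and short positions coordinate-wise into a single $H_t$, and absorb the resulting round-trip bid-ask loss into the nonnegative remainder. Your explicit formula $r' = m \cdot \big[(\overline{S}_{t+k} - \underline{S}_{t+k}) + (\overline{S}_{t-1} - \underline{S}_{t-1})\big]$ with $m = \min(a,c)$ (i.e.\ $\vartheta = (a-c)^+$, $\tilde{\vartheta} = (a-c)^-$) just makes precise the paper's somewhat informal claim that suitable disjointly supported $\vartheta, \tilde{\vartheta}$ exist with $\Pi \leq \Xi$, and your summation-index argument for $F_{t-1,t+k} \subset F_{t+k}$ fills in a step the paper passes over with ``as we see.''
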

\begin{proof}
Fix $t$, $k$ such that $1 \leq t \leq t+k \leq T$ and take any $x \in F_{t-1,t+k}$. Let $x = \Pi - l$ where
$$\Pi = - \theta \overline{S}_{t-1} + \tilde{\theta} \underline{S}_{t-1} + \theta \underline{S}_{t+k} - \tilde{\theta} \overline{S}_{t+k}$$
and $\theta, \tilde{\theta} \in L_+^0(\mathbb{R}^d,\mathcal{F}_{t-1})$, $l \in L_+^0(\mathcal{F}_{t+k})$. Notice that there exist maybe another random vectors $\vartheta, \tilde{\vartheta} \in L_+^0(\mathbb{R}^d,\mathcal{F}_{t-1})$ such that $\lbrace \vartheta^i > 0, \; \tilde{\vartheta}^i > 0 \rbrace = \emptyset$, a.e. for any $i=1,\ldots,d$. Because at the same time we buy and short sale shares on these sets, hence we only pay transaction costs due to bid-ask spreads and the following inequality is satisfied
$$\Pi \leq \Xi := - \vartheta \overline{S}_{t-1} + \tilde{\vartheta} \underline{S}_{t-1} + \vartheta \underline{S}_{t+k} - \tilde{\vartheta} \overline{S}_{t+k} \in \mathcal{R}_{t+k}, \quad \mathbb{P}\text{-a.e.}$$
Now let us define the random vector $H_t := \vartheta - \tilde{\vartheta}$. Notice that $H_t \in L^0(\mathbb{R}^d,\mathcal{F}_{t-1})$ and $(H_t^i)^+ = \vartheta^i$, $(H_t^i)^- = \tilde{\vartheta}^i$. Moreover the random variable $\tilde{l} := \Xi - \Pi \in L_+^0(\mathcal{F}_{t+k})$ and we get the equality $x = \Pi - l = \Xi -l - \tilde{l}$. Let $r := l + \tilde{l} \in L_+^0(\mathcal{F}_{t+k})$. Then we have
$$x = - (H_t)^+ \cdot \overline{S}_{t-1} + (H_t)^- \cdot \underline{S}_{t-1} + (H_t)^+ \cdot \underline{S}_{t+k} - (H_t)^- \cdot \overline{S}_{t+k} - r \in F_{t-1,t+k}.$$
Also as we see $x \in F_{t+k}$ and by Lemma \ref{lem_A_F} $F_{t+k} \subset \mathcal{A}_{t+k}$.
\end{proof}

The following theorem presents the equivalent conditions for the absence of arbitrage in markets with bid-ask spreads and a money account.

\begin{theorem}[Fundamental theorem]\label{th1}
The following conditions are equivalent:\\
\indent (a) $\mathcal{A}_T \cap L_+^0 = \lbrace 0 \rbrace$ (NA);\\
\indent (b) $F_t \cap L_+^0(\mathcal{F}_t) = \lbrace 0 \rbrace$ for any $t=1, \ldots, T$;\\
\indent (c) $F_{t-1,t+k} \cap L_+^0(\mathcal{F}_{t+k}) = \lbrace 0 \rbrace$ for any $1 \leq t \leq t+k \leq T$;\\
\indent (d) $F_{t-1,t+k} \cap L_+^0(\mathcal{F}_{t+k}) = \lbrace 0 \rbrace$ and $F_{t-1,t+k} = \overline{F}_{t-1,t+k}$ for any\\
\indent \indent $1 \leq t \leq t+k \leq T$;\\
\indent (e) $\overline{F}_{t-1,t+k} \cap L_+^0(\mathcal{F}_{t+k}) = \lbrace 0 \rbrace$ for any $1 \leq t \leq t+k \leq T$;\\
\indent (f) there exists an equivalent bid-ask martingale measure $\mathbb{Q}$ for the bid-ask\\
\indent \indent process $(\underline{S}, \overline{S})$ such that $\frac{d\mathbb{Q}}{d\mathbb{P}} \in L^{\infty}$ (EBAMM);\\
\indent (g) there exists supCPS $(\hat{S},\mathbb{Q})$ and subCPS $(\check{S},\mathbb{Q})$ such that $\frac{d\mathbb{Q}}{d\mathbb{P}} \in L^{\infty}$.
\end{theorem}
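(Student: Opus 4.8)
The plan is to prove the seven conditions equivalent by closing the cycle $(a)\Rightarrow(b)\Rightarrow(c)\Rightarrow(d)\Rightarrow(e)\Rightarrow(f)\Rightarrow(g)\Rightarrow(a)$, pushing all the purely set-theoretic inclusions onto the lemmas already established and concentrating the real work in three places. The routine arrows I would dispatch first. For $(a)\Rightarrow(b)$ I would combine Lemma~\ref{rem3}, which propagates (NA) to every horizon, with Lemma~\ref{lem_A_F} giving $F_t\subset\mathcal{A}_t$, so that $F_t\cap L_+^0(\mathcal{F}_t)\subset\mathcal{A}_t\cap L_+^0(\mathcal{F}_t)=\{0\}$. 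For $(b)\Rightarrow(c)$ I would use $F_{t-1,t+k}\subset F_{t+k}$ from Lemma~\ref{lem_Ftk}. The implication $(d)\Rightarrow(e)$ is immediate, and the containment $F\subset\overline F$ makes $(e)\Rightarrow(c)$ automatic, so $(c)$, $(d)$, $(e)$ are quickly tied together once closedness is known. Finally $(g)\Rightarrow(a)$ is exactly Theorem~\ref{th2}, while $(g)\Rightarrow(f)$ is the computation of Lemma~\ref{CPS_EBAMM} applied separately to the supermartingale $\hat S$ (yielding $E_{\mathbb{Q}}(\underline S_t^i\mid\mathcal{F}_{t-1})\le\hat S_{t-1}^i\le\overline S_{t-1}^i$) and to the submartingale $\check S$ (yielding $\underline S_{t-1}^i\le\check S_{t-1}^i\le E_{\mathbb{Q}}(\overline S_t^i\mid\mathcal{F}_{t-1})$).

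The first substantial step is $(c)\Rightarrow(d)$, the closedness in probability of each one-step cone $F_{t-1,t+k}$. By Lemma~\ref{lem_Ftk} a generic element is $x_{t-1,t+k}(H)-r$ with $H\in L^0(\mathbb{R}^d,\mathcal{F}_{t-1})$, $r\in L_+^0(\mathcal{F}_{t+k})$, so the cone is generated over $\mathcal{F}_{t-1}$ by the two families of directions $\underline S_{t+k}-\overline S_{t-1}$ and $\underline S_{t-1}-\overline S_{t+k}$. Given a sequence converging in probability I would normalise the generating coefficients and partition $\Omega$ according to whether $\liminf\lVert H^{(n)}\rVert$ is finite or infinite; on the infinite part, dividing by the norm and passing to an a.s.\ convergent random subsequence by a measurable selection argument as in \cite{KabSaf} (\S2.1.1), the limit yields a nonzero $\mathcal{F}_{t-1}$-measurable direction whose payoff lies in $L_+^0(\mathcal{F}_{t+k})$, hence vanishes by $(c)$; this redundancy removes one coordinate and allows an induction on the dimension $d$. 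This is the Dalang--Morton--Willinger closedness argument adapted to the two direction families, and I expect it to be the most delicate of the \emph{technical} steps.

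The second step is the dual construction $(e)\Rightarrow(f)$. Fixing $k=0$ and a time $t$, I would apply a Kreps--Yan/Hahn--Banach separation to the closed convex cone $\overline F_{t-1,t}\subset L^0(\mathcal{F}_{t})$, which meets $L_+^0(\mathcal{F}_{t})$ only at $0$, to produce a probability measure with bounded density whose expectation is nonpositive on the cone. Testing against the generators $H\cdot(\underline S_{t}-\overline S_{t-1})$ with $H\in L_+^0(\mathbb{R}^d,\mathcal{F}_{t-1})$ and $(-H)\cdot(\underline S_{t-1}-\overline S_{t})$ reads off precisely $E_{\mathbb{Q}}(\underline S_t^i\mid\mathcal{F}_{t-1})\le\overline S_{t-1}^i$ and $\underline S_{t-1}^i\le E_{\mathbb{Q}}(\overline S_t^i\mid\mathcal{F}_{t-1})$, i.e.\ the defining EBAMM inequalities. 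The remaining point is to \emph{aggregate} these one-step densities into a single equivalent measure with density in $L^{\infty}$, which I would do by backward induction, multiplying successive conditional Radon--Nikodym densities and invoking the tower property so that each consecutive pair of inequalities survives unchanged.

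The last step, $(f)\Rightarrow(g)$, is where I expect the genuine obstacle to lie. The natural attempt is a backward recursion under $\mathbb{Q}$ inside the band: set $\hat S_T:=\underline S_T$ and $\hat S_{t-1}:=\underline S_{t-1}\vee E_{\mathbb{Q}}(\hat S_t\mid\mathcal{F}_{t-1})$ to force the supermartingale inequality and the lower bound, and dually $\check S_T:=\overline S_T$, $\check S_{t-1}:=\overline S_{t-1}\wedge E_{\mathbb{Q}}(\check S_t\mid\mathcal{F}_{t-1})$ for the submartingale. The supermartingale/submartingale properties and one of the two band inequalities then hold by construction; together with $(g)\Rightarrow(a)$ and $(g)\Rightarrow(f)$ this would close the cycle. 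The crux is verifying the \emph{other} band inequality at each step, namely $E_{\mathbb{Q}}(\hat S_t\mid\mathcal{F}_{t-1})\le\overline S_{t-1}$ and $E_{\mathbb{Q}}(\check S_t\mid\mathcal{F}_{t-1})\ge\underline S_{t-1}$, since these do not follow from the consecutive EBAMM inequalities by a one-line estimate. I anticipate that making this recursion stay within $[\underline S,\overline S]$ is the true content of the theorem and will require the full strength of the no-arbitrage hypothesis (for instance the multi-step instances of $(e)$, which control $E_{\mathbb{Q}}(\underline S_{t+k}\mid\mathcal{F}_{t-1})$ against $\overline S_{t-1}$), rather than the consecutive inequalities alone; this, alongside the closedness in $(c)\Rightarrow(d)$, is where I would expect to spend the most effort.
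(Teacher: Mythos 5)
Your cycle and your treatment of the arrows (a)$\Rightarrow$(b)$\Rightarrow$(c)$\Rightarrow$(d)$\Rightarrow$(e)$\Rightarrow$(f) and (g)$\Rightarrow$(a) coincide with the paper's: the routine inclusions are delegated to Lemmas~\ref{rem3}, \ref{lem_A_F} and \ref{lem_Ftk}; the closedness in (c)$\Rightarrow$(d) is the Kabanov--Stricker normalisation/measurable-subsequence/coordinate-elimination argument; and (e)$\Rightarrow$(f) is Kreps--Yan applied to each one-step cone (after an equivalent change of measure making $\overline{S}$ integrable, a reduction you omit but which is harmless), followed by the multiplicative gluing of conditional densities, which works exactly because each new density factor is $\mathcal{F}_t$-measurable. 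Up to that point the proposal is correct and is essentially the paper's proof.

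The genuine gap is (f)$\Rightarrow$(g), and it sits exactly where you locate it. The recursion $\hat{S}_{t-1}:=\underline{S}_{t-1}\vee E_{\mathbb{Q}}(\hat{S}_t\vert\mathcal{F}_{t-1})$ run globally under the single measure $\mathbb{Q}$ delivered by (f) does not stay in the band: since $\hat{S}_t$ may be as large as $\overline{S}_t$, the consecutive EBAMM inequalities only bound $E_{\mathbb{Q}}(\hat{S}_t\vert\mathcal{F}_{t-1})$ by $E_{\mathbb{Q}}(\overline{S}_t\vert\mathcal{F}_{t-1})$, which EBAMM controls from below, not from above. Your proposed repair via multi-step instances of (e) controls $E_{\mathbb{Q}}(\underline{S}_{t+k}\vert\mathcal{F}_{t-1})$ only at deterministic horizons, whereas the Snell envelope produced by your recursion is a supremum over stopping times; the paper itself flags (through the quoted Guasoni--L\'epinette--R\'asonyi lemma and the remark following Corollary~\ref{equiv_T=1}) that the stopping-time version is precisely what is not available, so that route does not close. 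What the paper actually does is prove in effect (e)$\Rightarrow$(g) by induction on the length of the time window: at each step it re-selects the measure (a fresh Kreps--Yan density on $\mathcal{F}_t$ multiplied onto the inductively obtained one), runs the one-step construction \eqref{hat}--\eqref{check} only on the segment $[t-1,t]$, and then glues the stopped process $\check{S}_{\cdot\wedge\tau}$ --- a one-step martingale by optional stopping --- to the supermartingale supplied by the induction hypothesis on $[t,t+k]$. In particular the measure appearing in (g) is not the measure handed to you by (f), and the process is not a global Snell envelope. As written, your proposal names the obstacle but supplies no construction that overcomes it, so this implication remains unproved.
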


In the proof of Theorem \ref{th1} the following results will be used. Their proofs can be found e.g. in \cite{KabStr}.

\begin{lemma} \label{lem1}
Let $X_n$ be a sequence of random vectors taking values in $\mathbb{R}^d$ such that for almost all $\omega \in \Omega$ we have $\liminf \Vert X_n(\omega) \Vert < \infty$. Then there is a sequence of random vectors $Y_n$ taking values in $\mathbb{R}^d$ satisfying the following conditions:\\
(1) $Y_n$ converges pointwise to $Y$  almost surely where $Y$ is a random vector taking values in $\mathbb{R}^d$,\\
(2) $Y_n(\omega)$ is a convergent subsequence of $X_n(\omega)$ for almost all $\omega \in \Omega$.
\end{lemma}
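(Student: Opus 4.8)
The plan is to produce the subsequence as a \emph{randomized} subsequence: I would construct strictly increasing, $\mathbb{N}$-valued measurable indices $\sigma_1 < \sigma_2 < \cdots$ (random times) and set $Y_k := X_{\sigma_k}$, so that $(Y_k(\omega))_k$ is automatically a subsequence of $(X_n(\omega))_n$ for every $\omega$ and condition (2) reduces to a.s.\ convergence of $X_{\sigma_k}$. The only genuine content is measurability: for each fixed $\omega$ the hypothesis $\liminf_n\Vert X_n(\omega)\Vert<\infty$ already yields a convergent subsequence by Bolzano--Weierstrass, but these pointwise subsequences must be glued into honest measurable maps $\sigma_k$. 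I would build them as infima of countable families of measurable events, which are automatically measurable and, thanks to the liminf hypothesis, a.s.\ finite.

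First I would reduce to the bounded case. Applying the scalar construction described below to the nonnegative sequence $\Vert X_n\Vert$ (whose liminf is finite a.s.\ by assumption) yields measurable indices along which $\Vert X_n\Vert$ converges to a finite limit; after relabelling I may therefore assume that $(X_n)$ is a.s.\ bounded. This reduction is what makes the liminf hypothesis propagate to every subvector and spares me from re-verifying it after each extraction.

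Next comes the scalar heart of the argument ($d=1$, bounded sequence $\xi_n$). Put $v:=\liminf_n|\xi_n|$, which is finite and measurable, and define recursively $\tau_0:=0$ and $\tau_{k+1}:=\inf\{n>\tau_k : |\xi_n|\le v+\tfrac{1}{k+1}\}$; each $\tau_k$ is measurable as an infimum over a countable family of measurable sets and is a.s.\ finite because $|\xi_n|\le v+\varepsilon$ occurs infinitely often for every $\varepsilon>0$. Along this subsequence $|\xi_{\tau_k}|\to v$, but the sign may oscillate, so I stabilize it measurably: on the measurable set $B:=\bigcap_K\bigcup_{k\ge K}\{\xi_{\tau_k}\ge 0\}$ I retain only the indices $\tau_k$ with $\xi_{\tau_k}\ge 0$, and on $B^c$ only those with $\xi_{\tau_k}<0$. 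This yields measurable indices $\sigma_k$ with $\xi_{\sigma_k}\to v\mathbf{1}_B-v\mathbf{1}_{B^c}$ a.s.

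Finally I would pass from $d=1$ to general $d$ by induction on the coordinates. Given the bounded sequence $X_n=(X_n^1,\dots,X_n^d)$, apply the scalar step to the first coordinate to extract measurable indices along which $X^1$ converges; the remaining $(d-1)$-dimensional vectors stay bounded along this subsequence, so the induction hypothesis furnishes a further measurable subsequence on which $X^2,\dots,X^d$ converge while $X^1$ still converges, being a subsequence of a convergent sequence. Composing the index maps yields the $\sigma_k$ and hence $Y_k=X_{\sigma_k}\to Y$ a.s., with $Y$ measurable as an a.s.\ limit of measurable vectors, establishing (1) and (2). The main obstacle throughout is purely the measurable bookkeeping --- checking that each randomly defined index is a genuine measurable, a.s.\ finite random variable and that the sign-stabilization and the coordinatewise composition of index maps preserve measurability; the reduction to the bounded case in the first step is precisely what keeps this bookkeeping manageable.
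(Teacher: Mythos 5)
Your proposal is correct, and it is essentially a reconstruction of the argument the paper relies on: the paper gives no proof of its own, citing instead Lemma~2 of Kabanov--Stricker (\emph{A teacher's note on no-arbitrage criteria}) and Lemma~1 of Kabanov--R\'asonyi--Stricker, whose proof is exactly your scheme of randomized indices --- measurable, a.s.\ finite stopping-type indices built as infima of countable families of measurable events, first along $\Vert X_n\Vert$ to reduce to the a.s.\ bounded case, then coordinate by coordinate with composition of index maps (the paper's own remark after the lemma even restates the conclusion in your form, as an increasing sequence of integer-valued random variables $\sigma_k$ with $X_{\sigma_k}$ convergent a.s.). All the delicate points check out: each $\tau_k$ is measurable and a.s.\ finite because $\{n:\vert\xi_n\vert\le v+\varepsilon\}$ is a.s.\ infinite, the lower bound $\vert\xi_{\tau_k}\vert\ge v-\varepsilon$ eventually follows from the definition of $\liminf$, the set $B$ where nonnegative signs recur infinitely often is measurable and its complement eventually carries only negative signs, and $\{\alpha_{\beta_k}=m\}=\bigcup_j\{\beta_k=j\}\cap\{\alpha_j=m\}$ keeps composed indices measurable. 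The one stylistic difference from the cited proof: your sign-stabilization detour is avoidable --- for a bounded scalar sequence $\xi_n$ you can target $c:=\liminf_n\xi_n$ (not $\liminf_n\vert\xi_n\vert$) directly, since $c$ is a finite measurable cluster point, and define $\tau_{k+1}:=\inf\{n>\tau_k:\vert\xi_n-c\vert\le\tfrac{1}{k+1}\}$; this yields $\xi_{\tau_k}\to c$ in one stroke, which is how the standard argument proceeds. Your version costs an extra measurable-set argument but is equally valid.
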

\begin{proof}
See e.g. Lemma 2 in \cite{KabStr} or Lemma 1 in \cite{KabRasStr}.
\end{proof}
\begin{remark}
The above claim can be formulated as follows: there exists an increasing sequence of integer-valued random variables $\sigma_k$ such that $X_{\sigma_k}$ converges a.s.
\end{remark}

\begin{lemma}[Kreps-Yan] \label{KY}
Let $K \supseteq -L_+^1$ be a closed convex cone in $L^1$ such that $K \cap L_+^1 = \lbrace 0 \rbrace$. Then there is a probability $\widetilde{P} \sim P$ with $\frac{d\widetilde{P}}{dP} \in L^\infty$ such that $E_{\tilde{P}} \xi \leq 0$ for all $\xi \in K$.
\end{lemma}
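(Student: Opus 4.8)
The plan is to prove this classical Kreps--Yan theorem in two stages: first, for each nonzero nonnegative $f$, produce a \emph{single} bounded functional separating it from $K$; then glue countably many such functionals into one strictly positive density by an exhaustion argument. First I would fix an arbitrary $f \in L_+^1$ with $f \neq 0$. Since $K \cap L_+^1 = \lbrace 0 \rbrace$ we have $f \notin K$, and because $K$ is a closed convex cone in $L^1$, the Hahn--Banach separation theorem in its strict form (separating the compact set $\lbrace f \rbrace$ from the closed convex set $K$) yields a continuous linear functional on $L^1$, represented via the duality $(L^1)^\ast = L^\infty$ by some $g_f \in L^\infty$, together with a constant $c$ with $E[g_f \xi] \leq c < E[g_f f]$ for all $\xi \in K$. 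Evaluating at $\xi = 0 \in K$ gives $c \geq 0$, and the cone property forces $E[g_f \xi] \leq 0$ for every $\xi \in K$: if $E[g_f \xi_0] > 0$ for some $\xi_0 \in K$, then $E[g_f (n\xi_0)] \to \infty$, contradicting the bound $c$. Hence $\sup_{\xi \in K} E[g_f \xi] = 0 < E[g_f f]$. Moreover, since $-L_+^1 \subseteq K$, applying $E[g_f \xi] \leq 0$ to $\xi = -h$ for $h \in L_+^1$ gives $E[g_f h] \geq 0$ for all $h \geq 0$, whence $g_f \geq 0$ a.s.; after rescaling we may assume $0 \leq g_f \leq 1$. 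Thus $g_f$ is a nonnegative bounded functional that is nonpositive on all of $K$ and satisfies $E[g_f f] > 0$, so $g_f > 0$ on a subset of $\lbrace f > 0 \rbrace$ of positive measure.

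The heart of the argument, and the step I expect to be the main obstacle, is to merge these locally supported densities into one that is strictly positive almost everywhere. I would introduce the convex set
$$\mathcal{G} := \lbrace g \in L^\infty : 0 \leq g \leq 1, \; E[g\xi] \leq 0 \text{ for all } \xi \in K \rbrace,$$
and set $c := \sup \lbrace P(g > 0) : g \in \mathcal{G} \rbrace$. Choosing $g_n \in \mathcal{G}$ with $P(g_n > 0) \to c$ and putting $g := \sum_{n} 2^{-n} g_n$, one checks that $g \in \mathcal{G}$ (each term preserves the inequality $E[g_n \xi] \leq 0$, and the series converges in $L^\infty$) and that $\lbrace g > 0 \rbrace = \bigcup_n \lbrace g_n > 0 \rbrace$, so that $P(g > 0) = c$.

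It then remains to show $c = 1$. If instead $P(g = 0) > 0$, I would apply the first stage to $f := \mathbf{1}_{\lbrace g = 0 \rbrace} \in L_+^1 \setminus \lbrace 0 \rbrace$, obtaining $g_f \in \mathcal{G}$ with $E[g_f f] > 0$, hence $P(\lbrace g_f > 0 \rbrace \cap \lbrace g = 0 \rbrace) > 0$. Then $\tfrac{1}{2}(g + g_f) \in \mathcal{G}$ and its support strictly contains $\lbrace g > 0 \rbrace$, so $P(\tfrac{1}{2}(g + g_f) > 0) > c$, contradicting the definition of $c$. Therefore $g > 0$ a.s. Finally, with such a $g \in L^\infty$ satisfying $0 \leq g \leq 1$, $g > 0$ a.s., and $E[g\xi] \leq 0$ for all $\xi \in K$, I would define $\frac{d\widetilde{P}}{dP} := g / E[g]$. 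Since $g$ is bounded and strictly positive, $\widetilde{P} \sim P$ and $\frac{d\widetilde{P}}{dP} \in L^\infty$, and for every $\xi \in K$ we obtain $E_{\widetilde{P}} \xi = E[g\xi]/E[g] \leq 0$, which is exactly the desired conclusion.
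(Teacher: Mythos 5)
Your proof is correct, and it is essentially the paper's own proof: the paper does not argue the lemma itself but cites Lemma 3 of Kabanov--Stricker and Theorem 2.1.4 of Kabanov--Safarian, both of which prove Kreps--Yan by exactly your route --- Hahn--Banach strict separation of each $f \in L_+^1 \setminus \lbrace 0 \rbrace$ from the closed cone $K$, using the cone property and $-L_+^1 \subseteq K$ to get $0 \leq g_f \leq 1$ with $E[g_f \xi] \leq 0$ on $K$, followed by the countable exhaustion/maximal-support argument to obtain a strictly positive bounded density. All the delicate points (the constant $c \geq 0$ via $0 \in K$, nonnegativity of $g_f$, $E[g\xi] \leq 0$ surviving the $L^\infty$-convergent series, and the strict increase of $P(g>0)$ contradicting the supremum) are handled correctly in your write-up.
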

\begin{proof}
See e.g. Lemma 3 in \cite{KabStr} or Theorem 2.1.4 in \cite{KabSaf}.
\end{proof}

\begin{proof}[Proof of Theorem \ref{th1}]
\indent (a) $\Rightarrow$ (b) By Lemma \ref{rem3} $\mathcal{A}_t \cap L_+^0(\mathcal{F}_t) = \lbrace 0 \rbrace$ for any $t=1, \ldots, T$ and using Lemma \ref{lem_A_F} also $F_t \cap L_+^0(\mathcal{F}_t) = \lbrace 0 \rbrace$ for any $t=1, \ldots, T$.\\
\indent (b) $\Rightarrow$ (c). Trivial. (Notice that also the implication (a) $\Rightarrow$ (c) is obvious so we could skip the condition (b), which actually we put here to do our analysis more comprehensive.)\\
\indent (c) $\Rightarrow$ (d) To prove this implication we will use the similar technique as in \cite{KabStr} and especially \cite{RygStet} (see Theorem $2.33$). Take any $t$, $k$ such $1 \leq t \leq t+k \leq T$. First notice that by Lemma \ref{lem_Ftk} we have $F_{t-1,t+k} \cap L_+^0(\mathcal{F}_{t+k}) = \lbrace 0 \rbrace$. We will show that the set $F_{t-1,t+k}$ is closed in topology generated by the convergence in pro\-ba\-bi\-li\-ty of measure $\mathbb{P}$. Take a sequence $\xi^n \in F_{t-1,t+k}$ such that $\xi^n \rightarrow \zeta$ in pro\-ba\-bi\-li\-ty. It suffices to show that $\zeta \in F_{t-1,t+k}$. The sequence $\xi^n$ contains a subsequence convergent to $\zeta$ a.s. Thus, at most restricting to this subsequence we can assume that $\xi^n \rightarrow \zeta$, $\mathbb{P}$-a.s. By Lemma \ref{lem_Ftk} for any $n$ there exists $H_t^n \in L^0(\mathbb{R}^d,\mathcal{F}_{t-1})$ and $r_n \in L_+^0(\mathcal{F}_{t+k})$ such that
$$\xi^n = - (H_t^n)^+ \cdot \overline{S}_{t-1} + (H_t^n)^- \cdot \underline{S}_{t-1} + (H_t^n)^+ \cdot \underline{S}_{t+k} - (H_t^n)^- \cdot \overline{S}_{t+k} - r_n \in F_{t-1,t+k},$$
what simply means that $x_{t-1,t+k}(H_t^n) \rightarrow \zeta$, $\mathbb{P}$-a.s.\\
\indent Consider first the situation on the set $\Omega_1 := \lbrace \liminf \Vert H_t^n \Vert < \infty \rbrace \in \mathcal{F}_{t-1}$. By Lemma \ref{lem1} there exists an increasing sequence of integer-valued $\mathcal{F}_{t-1}$-measurable stopping times $\tau_n$ such that $H_t^{\tau_n}$ is convergent a.s. on $\Omega_1$ and for almost all $\omega \in \Omega_1$ the sequence $H_t^{\tau_n(\omega)}(\omega)$ is a convergent subsequence of the sequence $H_t^n(\omega)$. Notice that $H_t^{\tau_n} \in L^0(\mathbb{R}^d,\mathcal{F}_{t-1})$ and $r_{\tau_n} \in L_+^0(\mathcal{F}_{t+k})$ respectively. Let $\tilde{H}_t := \lim\limits_{n \to \infty}H_t^{\tau_n}$. Since $H_t^{\tau_n}$ is convergent, then also $(H_t^{\tau_n})^+$ and $(H_t^{\tau_n})^-$ are convergent. Moreover $(H_t^{\tau_n})^+ \rightarrow (\tilde{H}_t)^+$ and $(H_t^{\tau_n})^- \rightarrow (\tilde{H}_t)^-$. Hence also $r_{\tau_n}$ is convergent a.s. on $\Omega_1$. Define $\tilde{r} := \lim\limits_{n \to \infty}r_{\tau_n}$.
Then
$$\zeta = \lim\limits_{n \to \infty}(- (H_t^n)^+ \cdot \overline{S}_{t-1} + (H_t^n)^- \cdot \underline{S}_{t-1} + (H_t^n)^+ \cdot \underline{S}_{t+k} - (H_t^n)^- \cdot \overline{S}_{t+k} - r_n) =$$
$$\lim\limits_{n \to \infty}(- (H_t^{\tau_n})^+ \cdot \overline{S}_{t-1} + (H_t^{\tau_n})^- \cdot \underline{S}_{t-1} + (H_t^{\tau_n})^+ \cdot \underline{S}_{t+k} - (H_t^{\tau_n})^- \cdot \overline{S}_{t+k} - r_{\tau_n})$$
where the above limit is equal to
$$- (\tilde{H}_t)^+ \cdot \overline{S}_{t-1} + (\tilde{H}_t)^- \cdot \underline{S}_{t-1} + (\tilde{H}_t)^+ \cdot \underline{S}_{t+k} - (\tilde{H}_t)^- \cdot \overline{S}_{t+k} - \tilde{r} \in F_{t-1,t+k}.$$
\indent Consider now the situation on the set $\Omega_2 := \lbrace \liminf \Vert H_t^n \Vert = \infty \rbrace \in \mathcal{F}_{t-1}$. Let us define $G_t^n := \frac{H_t^n}{\Vert H_t^n \Vert}$, $h_n := \frac{r_n}{\Vert H_t^n \Vert}$ and notice that $G_t^n \in L^0(\mathbb{R}^d,\mathcal{F}_{t-1})$ and $h_n \in L_+^0(\mathcal{F}_{t+k})$. We get the convergence
$$- (G_t^n)^+ \cdot \overline{S}_{t-1} + (G_t^n)^- \cdot \underline{S}_{t-1} + (G_t^n)^+ \cdot \underline{S}_{t+k} - (G_t^n)^- \cdot \overline{S}_{t+k} - h_n \rightarrow 0.$$
Similarly as on the set $\Omega_1$ by Lemma \ref{lem1} there exists an increasing sequence of integer-valued $\mathcal{F}_{t-1}$-measurable stopping times $\sigma_n$ such that $G_t^{\sigma_n}$ is convergent a.s. on $\Omega_2$ and for almost all $\omega \in \Omega_2$ the sequence $G_t^{\sigma_n(\omega)}(\omega)$ is a convergent subsequence of the sequence $G_t^n(\omega)$. Let $\tilde{G}_t := \lim\limits_{n \to \infty}G_t^{\sigma_n}$. As previous, notice that by the convergence of the sequence $G_t^{\sigma_n}$ also $(G_t^{\sigma_n})^+$ and $(G_t^{\sigma_n})^-$ are convergent. Moreover $(G_t^{\sigma_n})^+ \rightarrow (\tilde{G}_t)^+$ and $(G_t^{\sigma_n})^- \rightarrow (\tilde{G}_t)^-$. Hence also $h_{\sigma_n}$ is convergent a.s. on $\Omega_2$. Define $\tilde{h} := \lim\limits_{n \to \infty}h_{\sigma_n}$. We get the following equality
$$- (\tilde{G}_t)^+ \cdot \overline{S}_{t-1} + (\tilde{G}_t)^- \cdot \underline{S}_{t-1} + (\tilde{G}_t)^+ \cdot \underline{S}_{t+k} - (\tilde{G}_t)^- \cdot \overline{S}_{t+k} = \tilde{h}.$$
By the condition $F_{t-1,t+k} \cap L_+^0(\mathcal{F}_{t+k}) = \lbrace 0 \rbrace$ we have $\tilde{h} = 0$, $\mathbb{P}$-a.e. Therefore we get
$$(\tilde{G}_t)^+ \cdot (\underline{S}_{t+k} - \overline{S}_{t-1}) - (\tilde{G}_t)^- \cdot (\overline{S}_{t+k} - \underline{S}_{t-1}) = 0, \qquad \mathbb{P}\text{-a.e. on} \; \Omega_2.$$
Since $\tilde{G}_t(\omega) \neq 0$ a.e. on $\Omega_2$ (because $G_t^{\sigma_n}(\omega)$ is a convergent subsequence of the sequence $G_t^n(\omega)$ for almost all $\omega \in \Omega_2$ and for almost all $\omega \in \Omega_2$ we have $\Vert G_t^n(\omega) \Vert = 1$) then there exists a partition of $\Omega_2$ into at most $d$ disjoint subsets $\Omega_2^i \in \mathcal{F}_{t-1}$ such that $\tilde{G}_t^i(\omega) \neq 0$ a. s. on $\Omega_2^i$. (Such a partition can be achieved by choosing $\Omega_2^1 := \lbrace \omega \in \Omega_2 \colon \tilde{G}_t^1(\omega) \neq 0 \rbrace$ and then continuing the partition already on the set $\Omega_2 \setminus \Omega_2^1$ choosing $\Omega_2^2 := \lbrace \omega \in \Omega_2 \setminus \Omega_2^1 \colon \tilde{G}_t^2(\omega) \neq 0 \rbrace$ and so on.) Moreover, any non-empty set $\Omega_2^i$ we can divide into at most two disjoint subsets $\Omega_2^{i,+} := \lbrace \tilde{G}_t^i > 0 \rbrace$ and $\Omega_2^{i,-} := \lbrace \tilde{G}_t^i < 0 \rbrace$. Let us define for any non-empty set $\Omega_2^{i,+}$ and $\Omega_2^{i,-}$ the following sequences
\begin{equation}
(\overline{H}_t^n)^p := (H_t^n)^+ - \beta_n (\tilde{G}_t)^+ \; \text{where} \quad \beta_n := \min\limits_{i: (\tilde{G}_t^i)^+ > 0} \frac{(H_t^{ni})^+}{(\tilde{G}_t^i)^+} \; \text{on} \; \Omega_2^{i,+},
\end{equation}
\begin{equation}
(\overline{H}_t^n)^m := (H_t^n)^- - \beta_n (\tilde{G}_t)^- \; \text{where} \quad \beta_n := \min\limits_{i: (\tilde{G}_t^i)^- > 0} \frac{(H_t^{ni})^-}{(\tilde{G}_t^i)^-} \; \text{on} \; \Omega_2^{i,-}.
\end{equation}
Finally we put $\overline{H}_t^n := (\overline{H}_t^n)^p - (\overline{H}_t^n)^m$ (equivalently for any $i = 1,\ldots,d$ we could define $\overline{H}_t^{ni} = (\overline{H}_t^{ni})^p$ on a non-empty set $\Omega_2^{i,+}$ and $\overline{H}_t^{ni} = -(\overline{H}_t^{ni})^m$ on a non-empty set $\Omega_2^{i,-}$). First notice that $\beta_n$ depends on $\omega \in \Omega_2$ but it is a well defined random variable satisfying the inequality $\beta_n \geq 0$. Moreover $(\overline{H}_t^n)^p \geq 0$ and $(\overline{H}_t^n)^m \geq 0$. Indeed, let us consider the situation on any non-empty set $\Omega_2^{i,+}$. For any $j = 1,\ldots,d$ we have
$$(H_t^{nj})^+ - \frac{(H_t^{nj})^+}{(\tilde{G}_t^j)^+} (\tilde{G}_t^j)^+ = 0 \qquad \text{and} \qquad 0 \leq \beta_n \leq \frac{(H_t^{nj})^+}{(\tilde{G}_t^j)^+}.$$
Hence $(\overline{H}_t^n)^p \geq 0$ and there exists at least one coordinate of $(\overline{H}_t^n)^p$ which is now equal to zero. Notice that this coordinate depends on $\omega$. The situation on the set $\Omega_2^{i,-}$ is analogous. In fact $(\overline{H}_t^n)^+ = (\overline{H}_t^n)^p$ and $(\overline{H}_t^n)^- = (\overline{H}_t^n)^m$. Hence we get
$$x_{t-1,t+k}(\overline{H}_t^n) = - (\overline{H}_t^n)^+ \cdot \overline{S}_{t-1} + (\overline{H}_t^n)^- \cdot \underline{S}_{t-1} + (\overline{H}_t^n)^+ \cdot \underline{S}_{t+k} - (\overline{H}_t^n)^- \cdot \overline{S}_{t+k}$$
$$= - [(H_t^n)^+ - \beta_n (\tilde{G}_t)^+] \cdot \overline{S}_{t-1} + [(H_t^n)^- - \beta_n (\tilde{G}_t)^-] \cdot \underline{S}_{t-1} +$$
$$+ [(H_t^n)^+ - \beta_n (\tilde{G}_t)^+] \cdot \underline{S}_{t+k} - [(H_t^n)^- - \beta_n (\tilde{G}_t)^-] \cdot \overline{S}_{t+k}$$
$$= x_{t-1,t+k}(H_t^n) - \beta_n (- \tilde{G}_t^+ \cdot \overline{S}_{t-1} + \tilde{G}_t^- \cdot \underline{S}_{t-1} + \tilde{G}_t^+ \cdot \underline{S}_{t+k} - \tilde{G}_t^- \cdot \overline{S}_{t+k}) =$$
$$= x_{t-1,t+k}(H_t^n).$$
Summing up $x_{t-1,t+k}(\overline{H}_t^n) = x_{t-1,t+k}(H_t^n)$, $\mathbb{P}$-a.e. on $\Omega_2$ and at least one coordinate of $\overline{H}_t^n$ is equal to zero. However, notice that this coordinate of course may differ in dependence of $\omega \in \Omega_2$.
Now we apply our procedure to the sequence $\overline{\xi}^n := x_{t-1,t+k}(\overline{H}_t^n) - \overline{r}_n \rightarrow \zeta$, $\mathbb{P}$-a.s. on $\Omega_2$.
It is noteworthy that our operations do not affect zero coordinates of the sequence $\overline{H}_t^n$. So by iteration, after a finite number of steps, we construct the desired sequence.\\
\indent (d) $\Rightarrow$ (e) Trivial.\\
\indent (e) $\Rightarrow$ (f) To prove this implication we use some techniques from \cite{Rol} combined with the construction of measure by induction similarly as in \cite{RygStet} (see Corollary $2.35$). Notice that for any random variable $\eta$ there exists a probability measure $P^{'} \sim \mathbb{P}$ such that $\frac{dP^{'}}{d\mathbb{P}} \in L^{\infty}$ and $\eta \in L^1(P^{'})$. Property (d) is invariant under an equivalent change of probability. This consideration allows as to assume without loss of generality that all $\underline{S}_t$, $\overline{S}_t$ are integrable. We will use induction on time horizon or equivalently on $k$. First let $k=0$ and fix any $t \in \lbrace 1, \ldots, T \rbrace$. Define the set $\Psi_{t-1,t} := \overline{F}_{t-1,t} \cap L^1(\mathcal{F}_t)$, which is a closed convex cone in $L^1(\mathcal{F}_t)$. Since we have $\Psi_{t-1,t} \cap L_+^1(\mathcal{F}_t) = \lbrace 0 \rbrace$ then by Lemma \ref{KY} there exists a probability measure $\mathbb{Q}^t \sim \mathbb{P}$ on $(\Omega,\mathcal{F}_t)$ such that $\frac{d\mathbb{Q}^t}{d\mathbb{P}} \in L^{\infty}(\mathcal{F}_t)$ and $E_{\mathbb{Q}^t}\xi \leq 0$ for any $\xi \in \Psi_{t-1,t}$. In particular for
\begin{equation} \label{eq_1}
\xi_{t-1,t}^i = -H_t^i \overline{S}_{t-1}^i + H_t^i \underline{S}_t^i,
\end{equation}
\begin{equation} \label{eq_2}
\tilde{\xi}_{t-1,t}^i = H_t^i \underline{S}_{t-1}^i - H_t^i \overline{S}_t^i
\end{equation}
where $H_t = (0, \ldots, 1{\hskip -4.0 pt}\hbox{1}_A,\ldots,0), \; \mathbb{P}$-a.e., $A \in \mathcal{F}_{t-1}$ and the value $1{\hskip -4.0 pt}\hbox{1}_A$ is on $i$-th position. For the case (\ref{eq_1}) it means that at time $t-1$ if the event $A$ holds we buy $i$-th asset at price $\overline{S}_{t-1}^i$ and liquidate the portfolio at time $t$. For the case (\ref{eq_2}) the situation is opposite, i.e. first we short sale $i$-th asset at time $t-1$ and then we buy it at time $t$. Hence we get the inequalities
$$E_{\mathbb{Q}^t}[(\underline{S}_t^i - \overline{S}_{t-1}^i)1{\hskip -4.0 pt}\hbox{1}_A] \leq 0,$$
$$E_{\mathbb{Q}^t}[(\overline{S}_t^i - \underline{S}_{t-1}^i)1{\hskip -4.0 pt}\hbox{1}_A] \geq 0.$$
Then $E_{\mathbb{Q}^t}(\underline{S}_t^i 1{\hskip -4.0 pt}\hbox{1}_A) \leq E_{\mathbb{Q}^t}(\overline{S}_{t-1}^i 1{\hskip -4.0 pt}\hbox{1}_A)$ and $E_{\mathbb{Q}^t}(\overline{S}_t^i 1{\hskip -4.0 pt}\hbox{1}_A) \geq  E_{\mathbb{Q}^t}(\underline{S}_{t-1}^i 1{\hskip -4.0 pt}\hbox{1}_A)$ for any $i = 1,\ldots,d$ and $A \in \mathcal{F}_{t-1}$. Hence
\begin{equation}
E_{\mathbb{Q}^t}(\underline{S}_t^i \vert \mathcal{F}_{t-1}) \leq E_{\mathbb{Q}^t}(\overline{S}_{t-1}^i \vert \mathcal{F}_{t-1}) = \overline{S}_{t-1}^i,
\end{equation}
\begin{equation}
E_{\mathbb{Q}^t}(\overline{S}_t^i \vert \mathcal{F}_{t-1}) \geq E_{\mathbb{Q}^t}(\underline{S}_{t-1}^i \vert \mathcal{F}_{t-1}) = \underline{S}_{t-1}^i.
\end{equation}
In conclusion, there exists (EBAMM) for the bid-ask process $(\underline{S},\overline{S})$ where $\underline{S} = (\underline{S}_j)_{j=t-1}^t$ and $\overline{S} = (\overline{S}_j)_{j=t-1}^t$.\\
\indent Assume now that the claim is true in a model with the time horizon $k$ where $k \geq 1$. We will show that it is true in a model with the time horizon $k+1$. Fix any $t$, $k$ such that $0 \leq t \leq t+k \leq T$. We show that there exists an equivalent bid-ask martingale measure in the market with the bid-ask process $(\underline{S},\overline{S})$ where $\underline{S} = (\underline{S}_j)_{j=t-1}^{t+k}$ and $\overline{S} = (\overline{S}_j)_{j=t-1}^{t+k}$. By the induction hypothesis there exists (EBAMM) $\mathbb{Q}^{t+k}$ in the market with the bid-ask process $((\underline{S}_j)_{j=t}^{t+k},(\overline{S}_j)_{j=t}^{t+k})$. Notice that the condition (d) is invariant under an equivalent change of probability. Hence we can apply the same method as in the previous part to the probability space $(\Omega,\mathcal{F}_t,\mathbb{Q}^{t+k}_{\vert \mathcal{F}_t})$ where $\mathbb{Q}^{t+k}_{\vert \mathcal{F}_t}$ denotes the measure $\mathbb{Q}^{t+k}$ with the restriction to the $\sigma$-algebra $\mathcal{F}_t$. Then there exists a probability measure $\mathbb{Q}^t \sim \mathbb{Q}^{t+k}_{\vert \mathcal{F}_t}$ such that $\frac{d\mathbb{Q}^t}{d\mathbb{Q}^{t+k}_{\vert \mathcal{F}_t}} \in L^{\infty}$ and the following inequalities are satisfied
$$E_{\mathbb{Q}^t}(\underline{S}_t \vert \mathcal{F}_{t-1}) \leq \overline{S}_{t-1} \qquad \text{and} \qquad E_{\mathbb{Q}^t}(\overline{S}_t \vert \mathcal{F}_{t-1}) \geq \underline{S}_{t-1}.$$
Let us define a probability measure $\mathbb{Q}$ on a measure space $(\Omega,\mathcal{F}_{t+k})$ as follows
\begin{equation}
\frac{d\mathbb{Q}}{d\mathbb{P}} := \frac{d\mathbb{Q}^t}{d\mathbb{Q}^{t+k}_{\vert \mathcal{F}_t}} \frac{d\mathbb{Q}^{t+k}}{d\mathbb{P}}.
\end{equation}
Notice that the density $\frac{d\mathbb{Q}^t}{d\mathbb{Q}^{t+k}_{\vert \mathcal{F}_t}}$ is bounded and $\mathcal{F}_{t}$-measurable hence for any $j \in \lbrace t+1, \ldots, t+k \rbrace$ we have
$$E_{\mathbb{Q}}(\underline{S}_j \vert \mathcal{F}_{j-1}) =
\frac{E_{\mathbb{P}}(\frac{d\mathbb{Q}^t}{d\mathbb{Q}^{t+k}_{\vert \mathcal{F}_t}} \frac{d\mathbb{Q}^{t+k}}{d\mathbb{P}} \underline{S}_j \vert \mathcal{F}_{j-1})}{E_{\mathbb{P}}(\frac{d\mathbb{Q}^t}{d\mathbb{Q}^{t+k}_{\vert \mathcal{F}_{j-1}}} \frac{d\mathbb{Q}^{t+k}}{d\mathbb{P}} \vert \mathcal{F}_{j-1})} =
\frac{E_{\mathbb{P}}(\frac{d\mathbb{Q}^{t+k}}{d\mathbb{P}} \underline{S}_j \vert \mathcal{F}_{j-1})}{E_{\mathbb{P}}(\frac{d\mathbb{Q}^{t+k}}{d\mathbb{P}} \vert \mathcal{F}_{j-1})} =$$
$$= E_{\mathbb{Q}^{t+k}}(\underline{S}_j \vert \mathcal{F}_{j-1}) \leq \overline{S}_{j-1}$$
and on the other hand
$$E_{\mathbb{Q}}(\overline{S}_j \vert \mathcal{F}_{j-1}) =
\frac{E_{\mathbb{P}}(\frac{d\mathbb{Q}^t}{d\mathbb{Q}^{t+k}_{\vert \mathcal{F}_t}} \frac{d\mathbb{Q}^{t+k}}{d\mathbb{P}} \overline{S}_j \vert \mathcal{F}_{j-1})}{E_{\mathbb{P}}(\frac{d\mathbb{Q}^t}{d\mathbb{Q}^{t+k}_{\vert \mathcal{F}_{j-1}}} \frac{d\mathbb{Q}^{t+k}}{d\mathbb{P}} \vert \mathcal{F}_{j-1})} =
\frac{E_{\mathbb{P}}(\frac{d\mathbb{Q}^{t+k}}{d\mathbb{P}} \overline{S}_j \vert \mathcal{F}_{j-1})}{E_{\mathbb{P}}(\frac{d\mathbb{Q}^{t+k}}{d\mathbb{P}} \vert \mathcal{F}_{j-1})} =$$
$$= E_{\mathbb{Q}^{t+k}}(\overline{S}_j \vert \mathcal{F}_{j-1}) \geq \underline{S}_{j-1}.$$
Moreover, it is noteworthy that $E_{\mathbb{Q}}(\overline{S}_t \vert \mathcal{F}_{t-1}) = E_{\mathbb{Q}^t}(\overline{S}_t \vert \mathcal{F}_{t-1})$ and analogously $E_{\mathbb{Q}}(\underline{S}_t \vert \mathcal{F}_{t-1}) = E_{\mathbb{Q}^t}(\underline{S}_t \vert \mathcal{F}_{t-1})$. By induction we conclude that there exists an equivalent bid-ask martingale measure for the bid-ask process $((\underline{S}_t)_{t=0}^T,(\overline{S}_t)_{t=0}^T)$.\\
\indent (f) $\Rightarrow$ (g) As in the previous implication we use induction on time horizon or equivalently on $k$. First let $k=0$ and fix any $t \in \lbrace 1, \ldots, T \rbrace$. Let us define the processes $\hat{S} = (\hat{S}_j)_{j=t-1}^t$ and $\check{S} = (\check{S}_j)_{j=t-1}^t$ as follows
\begin{equation} \label{hat}
\hat{S}_t := \underline{S}_t, \qquad \hat{S}_{t-1} := \max \lbrace \underline{S}_{t-1}, E_{\mathbb{Q}}(\hat{S}_t \vert \mathcal{F}_{t-1}) \rbrace,
\end{equation}
\begin{equation} \label{check}
\check{S}_t := \overline{S}_t, \qquad \check{S}_{t-1} := \min \lbrace \overline{S}_{t-1}, E_{\mathbb{Q}}(\check{S}_t \vert \mathcal{F}_{t-1}) \rbrace.
\end{equation}
Notice that $(\hat{S},\mathbb{Q})$ is (supCPS) and $(\check{S},\mathbb{Q})$ is (subCPS).\\
\indent Assume now that the claim is true in a model with the time horizon $k$ where $k \geq 1$. We will show that it is true in a model with the time horizon $k+1$. Fix any $t$, $k$ such that $0 \leq t \leq t+k \leq T$. We show that there exists (supCPS) $(\hat{S},\mathbb{Q})$ and (subCPS) $(\check{S},\mathbb{Q})$ in the market with the bid-ask process $(\underline{S},\overline{S})$ where $\underline{S} = (\underline{S}_j)_{j=t-1}^{t+k}$ and $\overline{S} = (\overline{S}_j)_{j=t-1}^{t+k}$. By the induction hypothesis there exists (supCPS) $((\hat{S}_j)_{j=t}^{t+k}, \mathbb{Q}^{t+k})$ and (subCPS) $((\check{S}_j)_{j=t}^{t+k}, \mathbb{Q}^{t+k})$ in the market with the bid-ask process $((\underline{S}_j)_{j=t}^{t+k},(\overline{S}_j)_{j=t}^{t+k})$. Notice that the condition (e) is invariant under an equivalent change of probability. Hence we can apply the same method as in the previous part to the pro\-ba\-bi\-li\-ty space $(\Omega,\mathcal{F}_t,\mathbb{Q}^{t+k}_{\vert \mathcal{F}_t})$ where $\mathbb{Q}^{t+k}_{\vert \mathcal{F}_t}$ denotes the measure $\mathbb{Q}^{t+k}$ with the restriction to the $\sigma$-algebra $\mathcal{F}_t$. Then there exists a pro\-ba\-bi\-li\-ty measure $\mathbb{Q}^t \sim \mathbb{Q}^{t+k}_{\vert \mathcal{F}_t}$ such that $\frac{d\mathbb{Q}^t}{d\mathbb{Q}^{t+k}_{\vert \mathcal{F}_t}} \in L^{\infty}$ and the processes $(\hat{S}_j)_{j=t-1}^t$, $(\check{S}_j)_{j=t-1}^t$ defined as in (\ref{hat}), (\ref{check}) are (supCPS), (subCPS). Let us define the stopping time $\tau := \min \lbrace j \geq t-1 \: \vert \: \check{S}_j = \check{S}_t \rbrace$. Then using the optimal stopping theorem the process $\check{S}^{\tau} := (\check{S}_{j \wedge \tau})_{j=t-1}^t$ is a $\mathbb{Q}^t$-martingale. We now define the probability measure $\mathbb{Q}$ on a measure space $(\Omega,\mathcal{F}_{t+k})$ as follows
\begin{equation}
\frac{d\mathbb{Q}}{d\mathbb{P}} := \frac{d\mathbb{Q}^t}{d\mathbb{Q}^{t+k}_{\vert \mathcal{F}_t}} \frac{d\mathbb{Q}^{t+k}}{d\mathbb{P}}.
\end{equation}
Furthermore let $\hat{S}^{'} = (\hat{S}^{'}_j)_{j=t-1}^{t+k}$ be the process of the form
\begin{equation}
\hat{S}^{'}_j = \hat{S}_j \; \text{for any} \; j > t \qquad \text{and} \qquad \hat{S}^{'}_j = \check{S}_{j \wedge \tau} \; \text{for} \; j = t-1, t.
\end{equation}
Notice that $\hat{S}^{'}_{t-1} = \check{S}_{t-1 \wedge \tau} = \check{S}_{t-1}$ and $\hat{S}^{'}_{t} = \check{S}_{t \wedge \tau} = \check{S}_{\tau}$. Hence the inequalities $\underline{S}_j \leq \hat{S}^{'}_j \leq \overline{S}_j$ are also satisfied for $j=t-1,t$. Moreover, the process $\hat{S}^{'} = (\hat{S}^{'}_j)_{j=t-1}^{t+k}$ is a $\mathbb{Q}$-supermartingale. In an analogous way we can construct a $\mathbb{Q}$-submartingale.\\
\indent (g) $\Rightarrow$ (a) It follows from Theorem \ref{th2}.
\end{proof}

\begin{remark}
The conditions from Theorem \ref{th1} are also equivalent to the another one, i.e. for any $1 \leq t \leq t+k \leq T$ there exists an equivalent bid-ask martingale measure $\mathbb{Q}_{t-1}^{t+k}$ for the bid-ask process $\lbrace (\underline{S}_{t-1},\overline{S}_{t-1}), (\underline{S}_{t+k},\overline{S}_{t+k}) \rbrace$ such that $\frac{d\mathbb{Q}_{t-1}^{t+k}}{d\mathbb{P}} \in L^{\infty}$.
\end{remark}

\begin{remark}
Let us define for any $t$, $k$ such that $1 \leq t \leq t+k \leq T$ the following sets $\mathcal{A}_{t-1,t+k} := \mathcal{R}_{t-1,t+k} - L_+^0(\mathcal{F}_{t+k})$ where $\mathcal{R}_{t-1,t+k} := \lbrace x_{t-1,t+k}(H_t) \; \vert \; H_t \in L^0(\mathbb{R}^d,\mathcal{F}_{t-1}) \rbrace$. Then under the assumption $\mathcal{A}_{t-1,t+k} \cap L_+^0(\mathcal{F}_{t+k}) = \lbrace 0 \rbrace$ the set $\mathcal{A}_{t-1,t+k}$ is closed in probability. It suffices to use the analogous reasoning as in the proof of the implication (c) $\Rightarrow$ (d) of Theorem \ref{th1}.
\end{remark}

\begin{corollary} \label{equiv_T=1}
If the time horizon $T=1$ then we have the following equivalence
$$\text{(NA)} \Leftrightarrow \text{(EBAMM)} \Leftrightarrow \text{(CPS)}.$$
\end{corollary}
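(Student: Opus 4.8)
The plan is to establish the cycle of implications
$$\text{(NA)} \Rightarrow \text{(EBAMM)} \Rightarrow \text{(CPS)} \Rightarrow \text{(NA)},$$
from which both stated equivalences follow at once. Two of the three arrows are already essentially free. The implication (NA) $\Rightarrow$ (EBAMM) is the implication (a) $\Rightarrow$ (f) of Theorem \ref{th1} specialized to $T=1$ (a bounded density is in particular a density, so we obtain an (EBAMM) in the sense of the corollary). The implication (CPS) $\Rightarrow$ (NA) is the content of the Remark following Theorem \ref{th2}: a (CPS) $(\tilde{S},\mathbb{Q})$ is at once a (supCPS) and a (subCPS) for one and the same measure, so the hypotheses of Theorem \ref{th2} hold and we conclude $\mathcal{A}_T\cap L_+^0=\lbrace 0\rbrace$. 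The arrow (CPS) $\Rightarrow$ (EBAMM) of Lemma \ref{CPS_EBAMM} is then a redundant but reassuring check.

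The heart of the matter, and the only step genuinely exploiting $T=1$, is (EBAMM) $\Rightarrow$ (CPS): from a one-step bid-ask martingale measure I must manufacture a true $\mathbb{Q}$-martingale $\tilde{S}$ trapped between the bid and ask processes. I would fix such a $\mathbb{Q}$ and argue coordinate by coordinate. Since $0<\underline{S}\leq\overline{S}$ and $\overline{S}_1$ is $\mathbb{Q}$-integrable, integrability transfers to $\underline{S}_1$, so $a^i:=E_{\mathbb{Q}}(\underline{S}_1^i\mid\mathcal{F}_0)$ and $b^i:=E_{\mathbb{Q}}(\overline{S}_1^i\mid\mathcal{F}_0)$ are well defined with $a^i\leq b^i$. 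The two (EBAMM) inequalities read exactly $a^i\leq\overline{S}_0^i$ and $\underline{S}_0^i\leq b^i$, which says precisely that the $\mathcal{F}_0$-measurable intervals $[a^i,b^i]$ and $[\underline{S}_0^i,\overline{S}_0^i]$ overlap. I can therefore pick an $\mathcal{F}_0$-measurable point $m^i:=\min\lbrace\max\lbrace a^i,\underline{S}_0^i\rbrace,\overline{S}_0^i\rbrace$ in their intersection and an $\mathcal{F}_0$-measurable weight $\lambda^i\in[0,1]$ solving $\lambda^i b^i+(1-\lambda^i)a^i=m^i$ (taking $\lambda^i=(m^i-a^i)/(b^i-a^i)$ on $\lbrace b^i>a^i\rbrace$ and $\lambda^i=0$ elsewhere). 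Setting $\tilde{S}_1^i:=\lambda^i\overline{S}_1^i+(1-\lambda^i)\underline{S}_1^i$ and $\tilde{S}_0^i:=m^i$ gives $\underline{S}_t^i\leq\tilde{S}_t^i\leq\overline{S}_t^i$ for $t=0,1$ together with $E_{\mathbb{Q}}(\tilde{S}_1^i\mid\mathcal{F}_0)=m^i=\tilde{S}_0^i$, so that $(\tilde{S},\mathbb{Q})$ is a (CPS).

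The main obstacle is essentially bookkeeping: confirming that the interval-overlap selection can be performed measurably (routine, since $a^i$, $b^i$, $\underline{S}_0^i$, $\overline{S}_0^i$ are all $\mathcal{F}_0$-measurable) and that the overlap is genuinely forced by the two (EBAMM) inequalities, so that no fixed-point or separation argument is needed. I would stress that this convex-combination device is intrinsically one-period: for $T\geq 2$ one must pin down the martingale at the intermediate dates simultaneously, and the pointwise freedom used here at a single step is no longer available. This is exactly the juncture at which Grigoriev's question for $d\geq 2$ becomes delicate, whereas for $T=1$ the number $d$ of risky assets plays no role whatsoever.
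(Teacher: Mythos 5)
Your proof is correct, and it supplies precisely the one step that the paper leaves implicit. The paper gives no separate argument for Corollary \ref{equiv_T=1}: the equivalence (NA) $\Leftrightarrow$ (EBAMM) is (a) $\Leftrightarrow$ (f) of Theorem \ref{th1}, and (CPS) $\Rightarrow$ (NA) follows from Theorem \ref{th2} (or (CPS) $\Rightarrow$ (EBAMM) from Lemma \ref{CPS_EBAMM}), but the genuinely new direction (EBAMM) $\Rightarrow$ (CPS) for $T=1$ is nowhere written out. The closest the paper comes is the one-step construction (\ref{hat})--(\ref{check}) in the proof of (f) $\Rightarrow$ (g), which produces a supermartingale $\hat{S}$ and a submartingale $\check{S}$ pinched between bid and ask under the same $\mathbb{Q}$ --- but, as the paper's own later remark concedes, having a (supCPS) and a (subCPS) under one measure does not by itself yield a (CPS). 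Your convex-combination interpolation is exactly the missing ingredient: since $a^i=E_{\mathbb{Q}}(\underline{S}_1^i\vert\mathcal{F}_0)\leq\overline{S}_0^i$ and $\underline{S}_0^i\leq b^i=E_{\mathbb{Q}}(\overline{S}_1^i\vert\mathcal{F}_0)$ force the $\mathcal{F}_0$-measurable intervals $[a^i,b^i]$ and $[\underline{S}_0^i,\overline{S}_0^i]$ to intersect, the point $m^i$ and the weight $\lambda^i$ are measurably selectable, and $\tilde{S}_1^i:=\lambda^i\overline{S}_1^i+(1-\lambda^i)\underline{S}_1^i$, $\tilde{S}_0^i:=m^i$ is a $\mathbb{Q}$-martingale in the bid-ask band (integrability of $\tilde{S}_0$ being automatic from $\tilde{S}_0=E_{\mathbb{Q}}(\tilde{S}_1\vert\mathcal{F}_0)$ and $0<\tilde{S}_1\leq\overline{S}_1\in L^1(\mathbb{Q})$). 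Your closing observation is also the right diagnosis of why the argument is intrinsically one-period: for $T\geq 2$ the weights chosen at different dates would have to be compatible simultaneously, which is exactly the obstruction the paper records in its remark on the Snell envelope and on extending the corollary to general $T$.
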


\begin{remark}
The condition (g) especially says that there exists supCPS $(\hat{S},\mathbb{Q})$. Let us define the Snell envelope of the process $\hat{S}$, i.e.
\begin{equation}
\tilde{S}_T := \hat{S}_T, \qquad \tilde{S}_{t-1} := \max \lbrace \hat{S}_{t-1}, E_{\mathbb{Q}}(\tilde{S}_t \vert \mathcal{F}_{t-1}) \rbrace
\end{equation}
for any $t = 1, \ldots, T$. Notice that by the optimal stopping theorem the random variable $\tau := \min \lbrace t \geq 0 \: \vert \: \tilde{S}_t = \hat{S}_t \rbrace$ is an optimal stopping time and the process $\tilde{S}^{\tau} := (\tilde{S}_{t \wedge \tau})_{t=0}^T$ is a $\mathbb{Q}$-martingale. On the other hand we cannot say that a pair $(\tilde{S}^{\tau},\mathbb{Q})$ is a consistent price system because we do not know whether $\underline{S}_t \leq \tilde{S}_{t \wedge \tau} \leq \overline{S}_t$ or not. We only know that $\underline{S}_{t \wedge \tau} \leq \tilde{S}_{t \wedge \tau} \leq \overline{S}_{t \wedge \tau}$.
\end{remark}

In \cite{GuaLepRas} the following result can be found (see Lemma $6.3$).
\begin{lemma}[Guasoni, L\'{e}pinette, R\'{a}sonyi]
Let $(X_t)_{t \in [0,T]}$ and $(Y_t)_{t \in [0,T]}$ be two c\`{a}dl\`{a}g bounded processes. The following conditions are equivalent:\\
(i) there exists a c\`{a}dl\`{a}g bounded martingale $(M_t)_{t \in [0,T]}$ such that $X \leq M \leq Y$a.s.\\
(ii) for all stopping times $\sigma$, $\tau$ such that $0 \leq \sigma \leq \tau \leq T$ a.s., we have
\begin{equation*}
E[X_{\tau} \vert \mathcal{F}_{\sigma}] \leq Y_{\sigma} \quad \text{and} \quad E[Y_{\tau} \vert \mathcal{F}_{\sigma}] \geq X_{\sigma} \quad \text{a.s.}
\end{equation*}
\end{lemma}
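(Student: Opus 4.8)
The plan is to treat the two implications separately, the first being routine and the second carrying all the weight.

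For (i) $\Rightarrow$ (ii) I would invoke the optional sampling theorem. Since $M$ is a bounded c\`{a}dl\`{a}g martingale and $\sigma \leq \tau \leq T$ are bounded stopping times, $M$ is uniformly integrable, so $E[M_\tau \vert \mathcal{F}_\sigma] = M_\sigma$ a.s. Combining this with the sandwich $X \leq M \leq Y$ gives $E[X_\tau \vert \mathcal{F}_\sigma] \leq E[M_\tau \vert \mathcal{F}_\sigma] = M_\sigma \leq Y_\sigma$ and symmetrically $E[Y_\tau \vert \mathcal{F}_\sigma] \geq M_\sigma \geq X_\sigma$, which is exactly (ii).

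For the converse (ii) $\Rightarrow$ (i) I would construct $M$ by discrete approximation followed by a limiting argument. First fix a refining sequence of finite partitions $0 = t_0^n < \dots < t_{k_n}^n = T$ of $[0,T]$ (the dyadic times) whose union is dense. On a fixed grid I solve the discrete martingale-selection problem by backward induction: let $\mathcal{V}_{t_i}$ be the random set of $\mathcal{F}_{t_i}$-measurable values that extend to a martingale on the remaining grid points staying inside $[X,Y]$. By convexity each $\mathcal{V}_{t_i}$ is an interval $[a_{t_i}, b_{t_i}]$, and starting from $\mathcal{V}_{t_{k_n}} = [X_T, Y_T]$ the recursion reads $a_{t_i} = \max\{X_{t_i}, E[a_{t_{i+1}} \vert \mathcal{F}_{t_i}]\}$ and $b_{t_i} = \min\{Y_{t_i}, E[b_{t_{i+1}} \vert \mathcal{F}_{t_i}]\}$, where a measurable-selection argument identifies the range of $E[\cdot \vert \mathcal{F}_{t_i}]$ over selections of the interval. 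A standard identity shows that $a_{t_i} = \operatorname{ess\,sup}_{\tau \geq t_i} E[X_\tau \vert \mathcal{F}_{t_i}]$ and $b_{t_i} = \operatorname{ess\,inf}_{\tau \geq t_i} E[Y_\tau \vert \mathcal{F}_{t_i}]$ over grid-valued stopping times; these are the discrete upper and lower Snell envelopes. Condition (ii), applied to the grid-valued stopping times realizing these essential extrema, yields the two cross-inequalities $E[a_{t_{i+1}} \vert \mathcal{F}_{t_i}] \leq Y_{t_i}$ and $X_{t_i} \leq E[b_{t_{i+1}} \vert \mathcal{F}_{t_i}]$, whence $a_{t_i} \leq b_{t_i}$. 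Thus the interval is non-empty at every step and a discrete martingale $M^n$ with $X_{t_i} \leq M^n_{t_i} \leq Y_{t_i}$ exists; I extend it to $[0,T]$ by $M^n_t := E[M^n_T \vert \mathcal{F}_t]$ in c\`{a}dl\`{a}g version.

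Finally I would pass to the limit. Each $M^n$ lies between the bounded processes $X$ and $Y$ on its grid, so the terminal variables $M^n_T$ are uniformly bounded; Lemma \ref{lem1} provides an a.s.-convergent (randomly indexed) subsequence with limit $M_T$, and boundedness upgrades this to $L^1$-convergence by dominated convergence. I set $M_t := E[M_T \vert \mathcal{F}_t]$ in its c\`{a}dl\`{a}g modification, a genuine martingale. At any fixed dyadic time $t$ lying in all sufficiently fine grids one has $X_t \leq M^n_t \leq Y_t$ for large $n$, and passing to the $L^1$-limit preserves $X_t \leq M_t \leq Y_t$; right-continuity of $X$, $Y$ and $M$ then extends the sandwich to every $t \in [0,T]$. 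I expect this last limiting step to be the main obstacle: one must ensure the limit is an honest martingale rather than a mere supermartingale, that the two-sided band is inherited at \emph{every} time and not only on the dense grid, and that c\`{a}dl\`{a}g regularity survives. This is precisely where boundedness of $X,Y$ and the right-continuity hypotheses are indispensable, and where a careful compactness and closedness argument in $L^1$ must replace the soft backward induction available in the finite grid.
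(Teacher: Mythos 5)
First, a point of comparison: the paper does not prove this lemma at all. It is quoted verbatim from \cite{GuaLepRas} (Lemma 6.3 there), and the remark immediately following it explains that the author does not even use it. So there is no proof in the paper to measure your attempt against; I can only judge the argument on its own terms.

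Your (i) $\Rightarrow$ (ii) is correct. The discrete backward induction in (ii) $\Rightarrow$ (i) is also essentially sound: the random intervals $[a_{t_i},b_{t_i}]$ are the right objects, and nonemptiness at step $i$ follows from $X_{t_i}\le Y_{t_i}$ (take $\sigma=\tau$ in (ii)), your two cross-inequalities, and the fourth inequality $E[a_{t_{i+1}}\vert\mathcal{F}_{t_i}]\le E[b_{t_{i+1}}\vert\mathcal{F}_{t_i}]$, which you should note comes from the inductive hypothesis $a_{t_{i+1}}\le b_{t_{i+1}}$; the selection $V=\lambda a_{t_{i+1}}+(1-\lambda)b_{t_{i+1}}$ with $\mathcal{F}_{t_i}$-measurable $\lambda$ realizes any prescribed conditional expectation. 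The genuine gap is in the limiting step. Lemma \ref{lem1} produces a \emph{randomly indexed} subsequence $M^{\sigma_k}_T$, where $\sigma_k$ is only $\mathcal{F}_T$-measurable; consequently $E[M^{\sigma_k}_T\vert\mathcal{F}_t]$ is \emph{not} $\sum_n \mathbf{1}_{\lbrace\sigma_k=n\rbrace}E[M^n_T\vert\mathcal{F}_t]$, and the only band one can extract from $M^{\sigma_k}_T\in[X_T,Y_T]$ is $E[M^{\sigma_k}_T\vert\mathcal{F}_t]\in[E[X_T\vert\mathcal{F}_t],E[Y_T\vert\mathcal{F}_t]]$, which has nothing to do with $[X_t,Y_t]$. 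So the assertion that ``passing to the $L^1$-limit preserves $X_t\le M_t\le Y_t$'' does not follow from the tool you invoke. The repair is to use convexity and the uniform bound instead of almost sure convergence along a random subsequence: the $M^n_T$ are bounded by a single constant, hence weakly relatively compact in $L^2$; along a weakly convergent (deterministic) subsequence, for every dyadic $t$ and every $A\in\mathcal{F}_t$ one gets $E[M_T\mathbf{1}_A]=\lim_j E[M^{n_j}_t\mathbf{1}_A]\le E[Y_t\mathbf{1}_A]$ since $t$ lies in all sufficiently fine grids, whence $M_t\le Y_t$ a.s., and symmetrically for the lower bound; Koml\'os or Mazur (Ces\`aro means of band-martingales are band-martingales) works equally well. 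After that your right-continuity argument extends the sandwich from the dense grid to all of $[0,T]$. You should also state that the filtration satisfies the usual conditions, so that $t\mapsto E[M_T\vert\mathcal{F}_t]$ admits a c\`adl\`ag modification; without this, the martingale you construct need not be of the form required in (i).
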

\begin{remark}
However, we cannot use this result in order to strengthen Theorem \ref{th1}, the question of the equivalence between the existence of (EBAMM) and (CPS) in our model is not solved. It is not clear whether Corollary \ref{equiv_T=1} can be extended to the case of any time horizon $T$. This problem also comes down to the following one. Assume that in the model there exists (supCPS) $(\hat{S},\mathbb{Q})$ and (subCPS) $(\check{S},\mathbb{Q})$ under the same probability measure. Is there a consistent price system in this model or not? 
\end{remark}

\begin{remark}
If we assume that there exists the process $S=(S_t)_{t=0}^T$ such that $\overline{S}_t^i = (1 + \lambda^i)S_t^i$ and $\underline{S}_t^i = (1 - \mu^i)S_t^i$
where $0 < \lambda^i, \mu^i < 1$ then our model comes down to a model with proportional transaction costs. Indeed, the value process is then of the form $x_t(H) =$
$$(H \cdot S)_t - \sum_{j=1}^t \lambda (\Delta H_j)^+ S_{j-1} - \sum_{j=1}^t \mu (\Delta H_j)^- S_{j-1} - \lambda (H_t)^- S_t - \mu (H_t)^+ S_t.$$
\end{remark}

\section{The Cox-Ross-Rubinstein model with bid-ask spreads}
\label{CRR}

The model of Cox, Ross and Rubinstein introduced in \cite{CoxRossRub} is very popular in the case of markets without friction and using this model we can estimate the famous Black-Scholes formula. Let us consider this model in the case of markets with bid-ask spreads introduced at the beginning. Define the dynamics of bid and ask processes $\underline{S}$, $\overline{S}$ as follows
\begin{equation}
\underline{S}_t = (1 + \underline{\zeta}_t) \overline{S}_{t-1} \qquad \text{and} \qquad \overline{S}_t = (1 + \overline{\zeta}_t) \underline{S}_{t-1}
\end{equation}
where $(\underline{\zeta}_t)_{t=1}^T$, $(\overline{\zeta}_t)_{t=1}^T$ are sequences of independent and identically distributed random variables. We assume that
$\mathbb{P}(\underline{\zeta}_t = \underline{u}) > 0$, $\mathbb{P}(\underline{\zeta}_t = \underline{d}) = 1 - \mathbb{P}(\underline{\zeta}_t = \underline{u}) > 0$ and in the second case $\mathbb{P}(\overline{\zeta}_t = \overline{u}) > 0$, $\mathbb{P}(\overline{\zeta}_t = \overline{d}) = 1 - \mathbb{P}(\overline{\zeta}_t = \overline{u}) > 0$. To do our analysis much more clear let $p := \mathbb{P}(\underline{\zeta}_t = \underline{u}) = \mathbb{P}(\overline{\zeta}_t = \overline{u})$ and without loss of generality we put $\underline{d} < \underline{u}$, $\overline{d} < \overline{u}$. Because bid and ask processes $\underline{S}$, $\overline{S}$ are strictly positive we get the inequalities $-1 < \underline{d}$ and $-1 < \overline{d}$. For notational simplicity we consider only the one step model. Our purpose is to estimate an equivalent bid-ask martingale measure. First let us define
\begin{equation}
\underline{S}_t^u := \overline{S}_{t-1}(1 + \underline{u}), \qquad \overline{S}_t^u := \underline{S}_{t-1}(1 + \overline{u}),
\end{equation}
\begin{equation}
\underline{S}_t^d := \overline{S}_{t-1}(1 + \underline{d}), \qquad \overline{S}_t^d := \underline{S}_{t-1}(1 + \overline{d}).
\end{equation}
It is illustrated in the figure below.

\begin{figure}[h]
\begin{center}
\includegraphics[width=5cm]{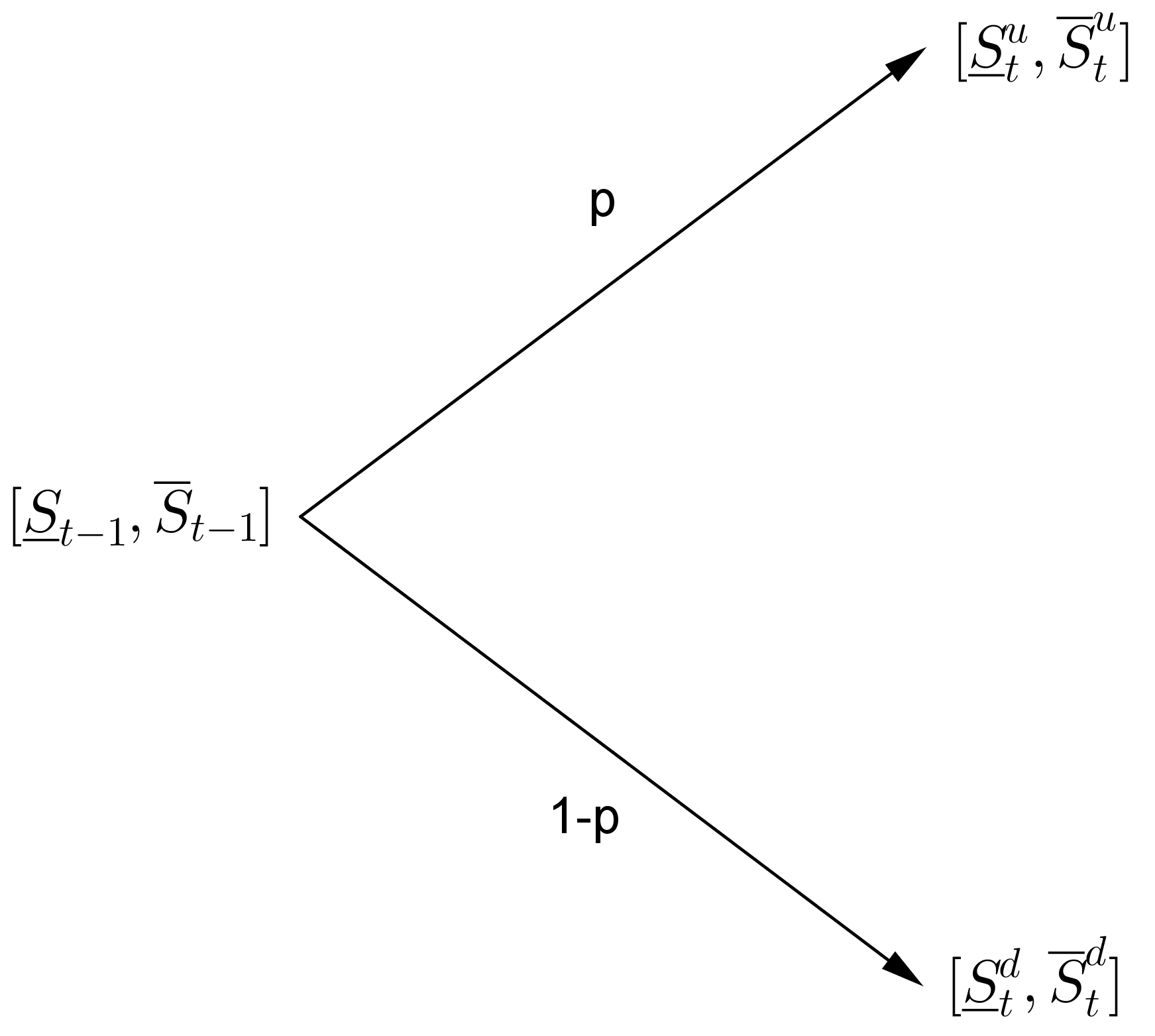}
\caption{}
\end{center}
\end{figure}

This model can be seen as a generalisation of the Cox-Ross-Rubinstein model to the case of bid-ask spreads. Notice that in our model we should also know that $\underline{S}_t \leq \overline{S}_t$. In order to assure this we assume that $\frac{\overline{S}_{t-1}}{\underline{S}_{t-1}} \leq \frac{1 + \overline{u}}{1 + \underline{u}}$ and $\frac{\overline{S}_{t-1}}{\underline{S}_{t-1}} \leq \frac{1 + \overline{d}}{1 + \underline{d}}$. Let us denote the price spread $\Delta := \overline{S}_{t-1} - \underline{S}_{t-1}$. Then the following inequalities should be satisfied
\begin{equation}
1 + \Delta \leq \frac{1 + \overline{d}}{1 + \underline{d}} \qquad \text{and} \qquad 1 + \Delta \leq \frac{1 + \overline{u}}{1 + \underline{u}}.
\end{equation}

By definition an equivalent bid-ask martingale measure (we shall denote it by $q$) should satisfy the following inequalities:
\begin{equation} \label{for_1}
\overline{S}_{t-1} (1 + \underline{u}) q + \overline{S}_{t-1} (1 + \underline{d}) (1 - q) \leq \overline{S}_{t-1},
\end{equation}

\begin{equation} \label{for_2}
\underline{S}_{t-1} (1 + \overline{u}) q + \underline{S}_{t-1} (1 + \overline{d}) (1 - q) \geq \underline{S}_{t-1}.
\end{equation}

Hence from the inequalities (\ref{for_1}) and (\ref{for_2}) we can estimate (EBAMM) as follows:

\begin{equation} \label{est_EBAMM}
\frac{- \overline{d}}{\overline{u} - \overline{d}} \leq q \leq \frac{- \underline{d}}{\underline{u} - \underline{d}}.
\end{equation}

Notice that for the existence of (EBAMM) we need to know that the following inequalities are satisfied.

\begin{equation}
\frac{- \overline{d}}{\overline{u} - \overline{d}} \leq \frac{- \underline{d}}{\underline{u} - \underline{d}}
\end{equation}
\begin{equation}
0 < \frac{- \underline{d}}{\underline{u} - \underline{d}} \qquad \text{and} \qquad \frac{- \overline{d}}{\overline{u} - \overline{d}} < 1.
\end{equation}
These conditions assure that there exists at least one $q \in (0,1)$. Then we get the necessary and sufficient conditions for the existence of (EBAMM), i.e.

\begin{equation} \label{eq_cond_CRR}
\underline{d} < 0 < \overline{u} \qquad \text{and} \qquad \underline{d} \: \overline{u} \leq \overline{d} \: \underline{u}.
\end{equation}

It is noteworthy that by Theorem \ref{th1} these conditions are equivalent to the absence of arbitrage in the Cox-Ross-Rubinstein model with bid-ask spreads. Now we illustrate our model by some examples.

\begin{example}
Consider the model with one risky asset and the time horizon $T=1$ presented in the Figure $2$. In this model there is no arbitrage and $q \in [0,1]$. Hence we can take any $q \in (0,1)$ to get an equivalent bid-ask martingale measure. Furthermore notice that $\underline{d} = - \frac{3}{4}$, $\overline{d} = \underline{u} = 0$, $\overline{u} = 3$ and the conditions from (\ref{eq_cond_CRR}) are satisfied. Indeed, we have $\underline{d} = -\frac{3}{4} < 0 < 3 = \overline{u} \:$ and $\: \underline{d} \: \overline{u} = - \frac{9}{4} \leq 0 = \overline{d} \: \underline{u}$.
\end{example}

\begin{figure}[h]
\begin{center}
\includegraphics[width=5cm]{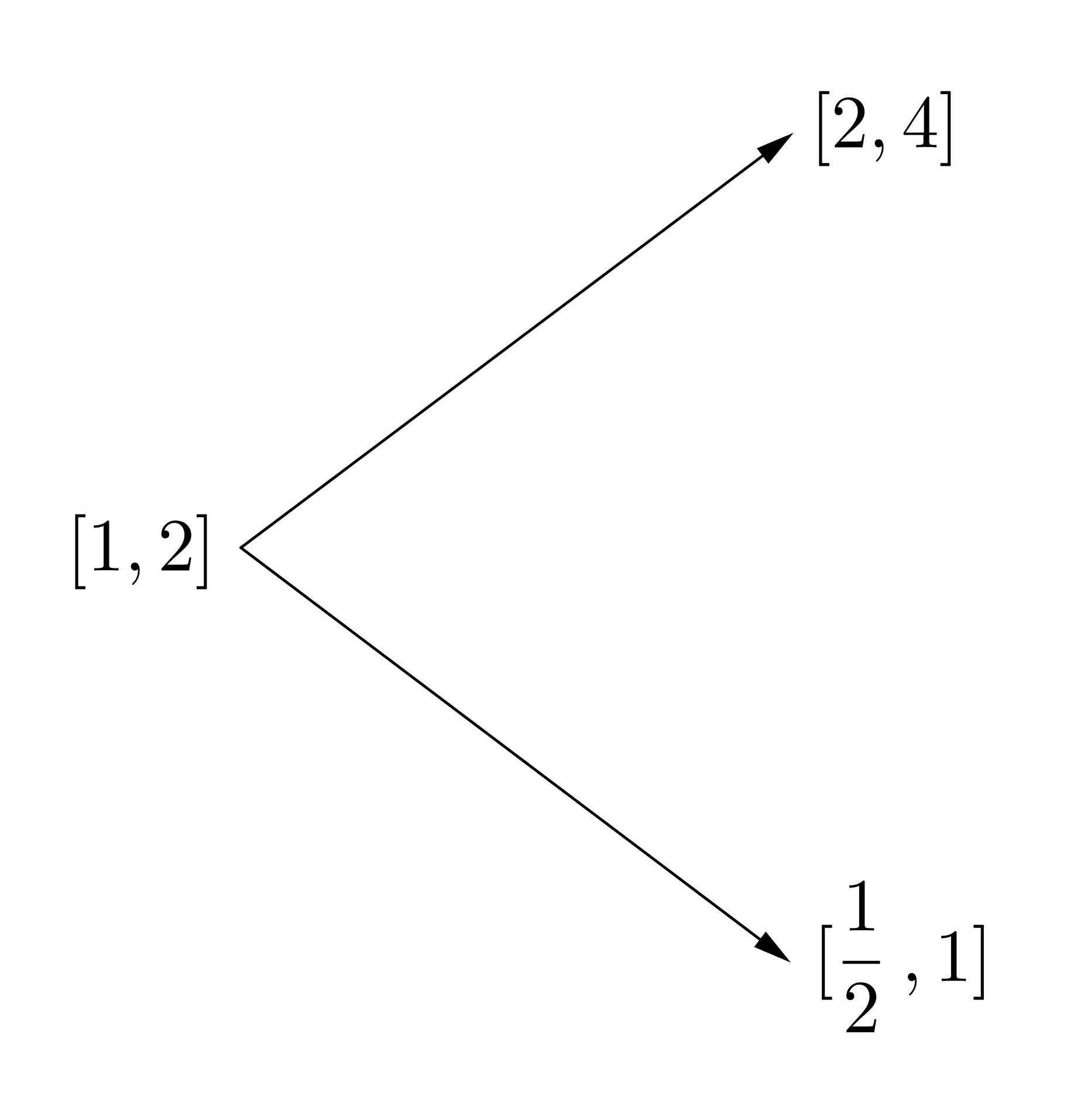}
\caption{}
\end{center}
\end{figure}
 
\begin{example}
Consider now the model with the time horizon $T=1$ and one risky asset presented in the Figure $3$.
\begin{figure}
\begin{center}
\includegraphics[width=5cm]{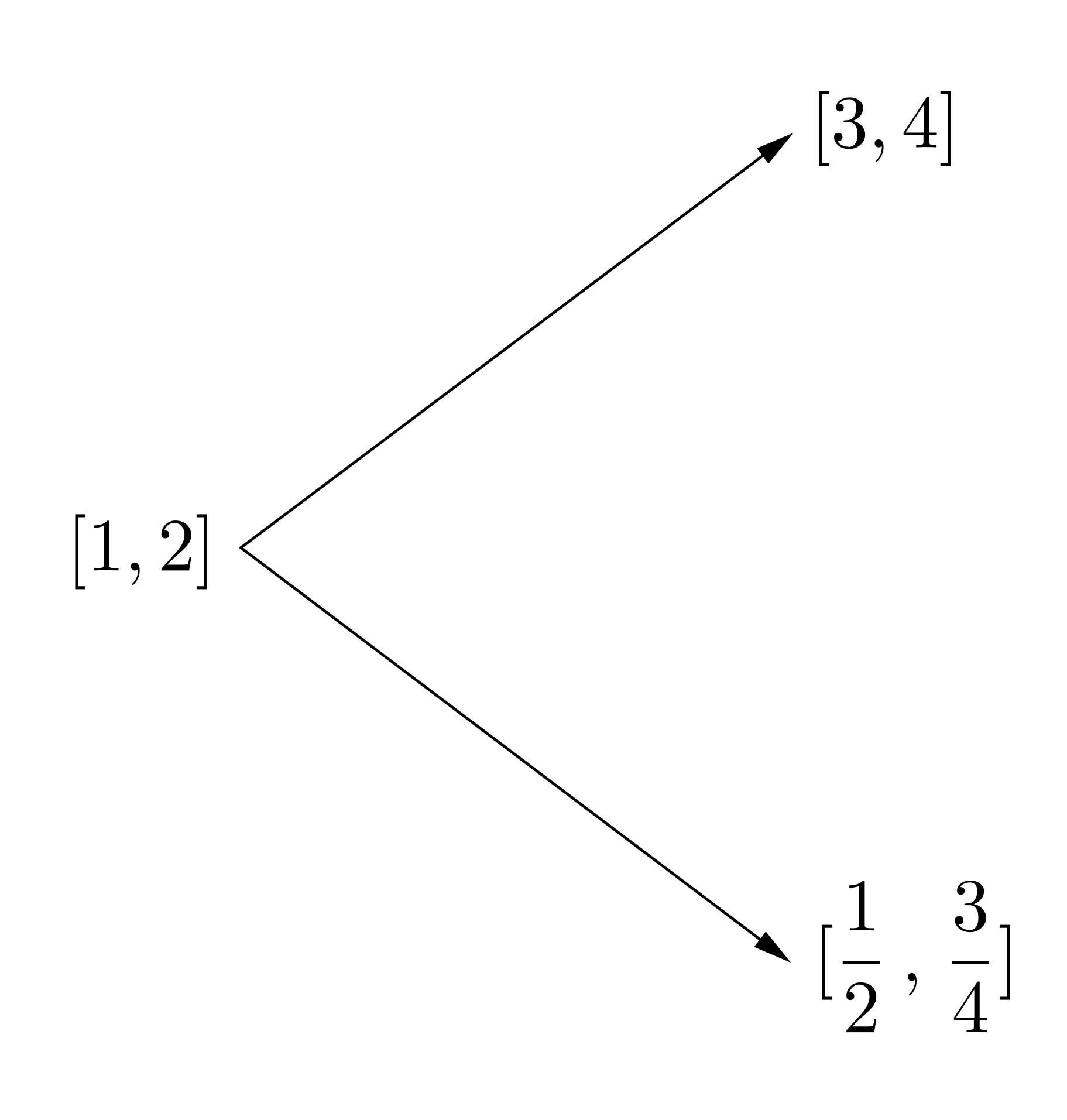}
\caption{}
\end{center}
\end{figure}
In this model we have $\overline{u} = 3$, $\overline{d} = - \frac{1}{4}$, $\underline{u} = \frac{1}{2}$, $\underline{d} = - \frac{3}{4}$ and by (\ref{est_EBAMM}) $q \in [\frac{1}{13},\frac{3}{5}]$. Notice that also the conditions from (\ref{eq_cond_CRR}) are satisfied. Indeed, we have $\underline{d} = -\frac{3}{4} < 0 < 3 = \overline{u}$ and $\underline{d} \: \overline{u} = - \frac{9}{4} \leq  - \frac{1}{8} = \overline{d} \: \underline{u}$.
\end{example}

\subsection*{Acknowledgements}
The author would like to thank Professor {\L}ukasz Stettner for numerous helpful suggestions and comments, which improved this paper.

$$ $$

\end{document}